\definecolor{shadecolor}{gray}{.925} %
\newtheorem{theorem}{Theorem} %
\newtheorem{lemma}[theorem]{Lemma} %
\newtheorem{definition}{Definition} %
\newtheorem{fact}{Fact} %
\newcommand{\microspace}{\mspace{.5mu}} %
\newcommand{\ket}[1]{\ensuremath{\lvert\microspace #1
    \microspace\rangle}} %
\newcommand{\bigket}[1]{\bigl\lvert\microspace #1
  \microspace\bigr\rangle} %
\newcommand{\Bigket}[1]{\Bigl\lvert\microspace #1
  \microspace\Bigr\rangle} %
\newcommand{\bra}[1]{\ensuremath{\langle\microspace #1
    \microspace\rvert}} %
\newcommand{\bigbra}[1]{\bigl\langle\microspace #1
  \microspace\bigr\rvert} %
\newcommand{\ip}[2]{\ensuremath{\left\langle#1,#2\right\rangle}} %
\newcommand{\braket}[2]{\ensuremath{\left\langle#1 \middle\vert
      #2\right\rangle}} %
\newcommand{\norm}[1]{\ensuremath{\left\lVert #1 \right\rVert}} %
\newcommand{\abs}[1]{\ensuremath{\left\lvert #1 \right\rvert}} %
\newcommand{\defeq}{\stackrel{\smash{\text{\tiny\rm def}}}{=}} %
\newcommand*{\dif}{\mathop{}\!\mathrm{d}} %
\newcommand{\complex}{\mathbb{C}} %
\newcommand{\class}[1]{\textup{#1}\xspace} %
\newcommand{\QMIP}{\class{QMIP}} %
\newcommand\QMIP*{\ensuremath{\class{QMIP}^*}} %
\newcommand{\MIP}{\class{MIP}} %
\newcommand\MIP*{\ensuremath{\class{MIP}^*}} %
\newcommand{\reg}[1]{\textsf{#1}} %
\newcommand{\dtr}{\operatorname{D}} 
\newcommand{\setft}[1]{\mathrm{#1}} %
\newcommand{\Density}{\setft{D}} %
\newcommand{\Unitary}{\setft{U}} %
\newcommand{\Lin}{\setft{L}} %
\newcommand{\State}{\setft{S}} %
\newcommand{\secpar}{\kappa} %
\newcommand{\secparam}{1^\secpar} %
\newcommand{\alg}[1]{\textsf{#1}\xspace} %
\newcommand{\KeyGen}{\alg{KeyGen}} %
\newcommand{\Bank}{\alg{Bank}} %
\newcommand{\Ver}{\alg{Ver}} %
\newcommand{\Count}{\alg{Count}} %
\newcommand{\money}{\$} %
\newcommand{\prf}{\alg{PRF}} %
\newcommand{\prp}{\alg{PRP}} %
\newcommand{\prs}{\alg{PRS}} %
\newcommand\Sym{\textrm{Sym}} %
\DeclareMathOperator{\poly}{poly} %
\DeclareMathOperator{\negl}{negl} %
\DeclareMathOperator{\tr}{tr} %
\def\I{\mathds{1}} 
\def\A{\mathcal{A}} %
\def\B{\mathcal{B}} %
\def\C{\mathcal{C}} %
\def\D{\mathcal{D}} %
\def\H{\mathcal{H}} %
\def\K{\mathcal{K}} %
\def\S{\mathcal{S}} %
\def\X{\mathcal{X}} %
\def\Y{\mathcal{Y}} %
\def\x{\mathbf{x}} %
\def\y{\mathbf{y}} %
\newcommand{\E}{\mathop{\mathbb{E}}\displaylimits} 
\begin{document}


\title{\LARGE\bf Pseudorandom States, Non-Cloning Theorems and Quantum
  Money}

\author[1]{Zhengfeng Ji} %
\author[2]{Yi-Kai Liu} %
\author[3]{Fang Song} %

\affil[1]{Centre for Quantum Software and Information,
  Faculty of Engineering and Information Technology\protect\\
  University of Technology Sydney, Ultimo 2007, NSW,
  Australia\vspace{2mm}}

\affil[2]{Joint Center for Quantum Information and Computer Science (QuICS),\protect\\
  National Institute of Standards and Technology (NIST), Gaithersburg, MD, USA,\protect\\
  and University of Maryland, College Park, MD, USA\vspace{2mm}}

\affil[3]{Computer Science Department, Portland State University,
  Portland 97201, OR, USA}

\date{}
\renewcommand\Affilfont{\normalsize\itshape} %
\renewcommand\Authfont{\large} %
\renewcommand\Authsep{\rule{8mm}{0mm}} %
\renewcommand\Authands{\rule{8mm}{0mm}} %

\maketitle

\thispagestyle{empty}

\begin{abstract}
  We propose the concept of pseudorandom states and study their
  constructions, properties, and applications.
  Under the assumption that quantum-secure one-way functions exist, we
  present concrete and efficient constructions of pseudorandom states.
  The non-cloning theorem plays a central role in our study---it
  motivates the proper definition and characterizes one of the
  important properties of pseudorandom quantum states.
  Namely, there is no efficient quantum algorithm that can create more
  copies of the state from a given number of pseudorandom states.
  As the main application, we prove that any family of pseudorandom
  states naturally gives rise to a private-key quantum money scheme.
\end{abstract}

\section{Introduction}
\label{sec:intro}


Pseudorandomness is one of the foundational concepts in modern
cryptography and theoretical computer science.
A distribution function, or a permutation is pseudorandom if it is
\emph{computationally indistinguishable} from a corresponding truly
random object~\cite{Sha83,Yao82,BM84}.
Pseudorandom objects, such as pseudorandom generators (PRGs),
pseudorandom functions (PRFs) and pseudorandom permutations (PRPs) are
fundamental and ubiquitous cryptographic building blocks in the design
of stream ciphers, block ciphers and message authentication
codes~\cite{GGM86,LR88,GGM85,Rom90,HILL99}.  In complexity theory,
pseudorandomness is of vital importance to an area called
derandomization~\cite{NW94,IW97}.


In quantum information, truly random bits can be generated easily with
trusted or even untrusted quantum devices.
Is pseudorandomness, a seemingly weaker notion of randomness, still
relevant in the context of quantum information processing?
The answer is yes.
Pseudorandom objects are usually much more computationally efficient.
By simple counting argument, one needs exponentially many bits
even to specify a truly random function or permutation on $n$-bit
strings.
Hence, truly random objects are not feasible in most of the
cryptography applications.
In this sense, we should think of pseudorandomness not as a weaker but
as a different variant of randomness with its own characteristics and
strength.


There are recent studies of pseudorandom objects from quantum
information perspectives motivated by their applications in
post-quantum cryptography.
One natural question is whether the classical constructions such as
PRFs and PRPs remain secure in the quantum setting.
This is a challenging task as, for example, a quantum adversary may
query the underlying function or permutation in \emph{superposition}.
Fortunately, people have so far restored a lot of positive results.
Assuming a one-way function that is hard to invert for polynomial-time
quantum algorithms, we can attain quantum-secure PRGs as well as
PRFs~\cite{HILL99,Zha12}.
Furthermore, assuming quantum-secure PRFs, one can construct quantum
secure PRPs using various \emph{shuffling}
constructions~\cite{Zha16,Song17}.


A related area of development in quantum information concerns
\textit{state $t$-designs} and \textit{unitary
  $t$-designs}~\cite{AE07,DCEL09,HL09,BHH16,CLLW16,KG15,Web16,Zhu15}.
These are quantum analogues of $t$-wise independent random variables,
which are another useful relaxation of randomness.
The major difference between $t$-wise independence and
pseudorandomness is the following.
In the case of $t$-wise independence, the observer who receives the
randomness is only given a fixed number of samples, but may be
computationally unbounded; hence quantum $t$-designs satisfy an
``information theoretic'' or ``statistical'' notion of security.
In contrast, in the case of pseudorandomness, the observer who
receives the randomness is assumed to be computationally efficient;
this leads to a ``computational'' notion of security, based on some
complexity theoretic assumption (e.g., the existence of one-way
functions).

In general, these two notions, $t$-wise independence and
pseudorandomness, are incomparable.
On one hand, the setting of pseudorandomness imposes stronger
restrictions on the observer, since it assumes a bound on the
observer's computational effort (say, running in probabilistic
polynomial time).
On the other hand, the setting of $t$-wise independence imposes
stronger restrictions on the observer, since it forces the observer to
be non-adaptive and limits the number of input copies to the fixed
parameter $t$, which is usually a constant or a fixed polynomial.
In addition, different distance measures are often used, e.g., trace
distance or diamond norm, versus computational distinguishability.


In this work, we study pseudorandom \emph{quantum} objects such as
quantum states and unitary
operators.
Quantum states (in analogy to
strings) and unitary operations (in analogy to functions) form
continuous spaces, and a ``random'' state or unitary operation
inevitably exhibits unique as well as puzzling features.  Formally,
people consider the Haar measure to capture perfect randomness on the
spaces of quantum states and unitary operators.  A natural basic
question is:
\begin{center}
  \itshape How to define and construct \emph{computational}
  \emph{pseudorandom} approximations of Haar randomness,\\ and what
  are their applications?
\end{center}


\noindent
\textit{Our contributions.}
We set forth to attack this question formally.
To this end, we propose definitions of pseudorandom quantum states
(PRS's) and pseudorandom unitary operators (PRUs), present efficient
constructions of PRS's, and demonstrate their applications such as
private-key quantum money schemes and quantum thermalization.
Our main contributions include:
\begin{enumerate}
\item We propose a suitable definition of \emph{quantum pseudorandom
    states}.

  What is a proper definition of PRS's?
  It is obvious that we should employ the notion of quantum
  \emph{computational indistinguishability} in the definition of
  quantum pseudorandom states.
  Roughly, we consider a collection of quantum states $\bigl\{ \ket{\phi_k}
  \bigr\}$ indexed by $k\in\K$; and as in the definition of
  pseudorandom distributions, we may require that no efficient quantum
  algorithm can tell the difference between $\ket{\phi_k}$ for a random
  $k$ and a state drawn according to the Haar measure.
  This is a reasonable definition that has been seriously considered in
  the literature (e.g.~\cite{BMW09,CCL+17}).
  Although this definition may be applicable in certain situations, it
  does not seem to grasp the quantum nature of the problem as purely
  classical distributions\footnote{For example, a uniform distribution
    over the computational basis state $\{\ket{k}\}$ has identical
    density matrix as the Haar random states.
    Yet, there is however absolutely no quantum phenomena in this
    family of states and most of the interesting applications
    discussed later in this paper become impossible.}
  may also satisfy the definition.
  Instead, we require that any adversary cannot tell the difference
  even given any \emph{polynomially many} copies of the state.
  This stronger definition, motivated by the non-cloning theorem, will
  in turn give rise to an interesting non-cloning theorem for
  pseudorandom states.
  One can argue that this definition is also a faithful generalization
  of pseudorandom distributions to the quantum state setting---the
  access to many copies is not explicitly mentioned classically as one
  can always make arbitrarily many copies of classical information.
  It is also consistent with the definition of pseudorandom Unitary
  operators discussed later in this paper.

\item We present concrete efficient constructions of PRS's with the
  minimal assumption that quantum-secure one-way functions exist.

  Our construction uses any quantum-secure $\prf=\{\prf_k\}_{k\in\K}$
  and computes it into the phases of a uniform superposition state.
  We call such family of PRS the \textit{random phase states}.
  This family of states can be efficiently generated using the quantum
  Fourier transform and a phase kick-back trick.
  We prove that this family of state is pseudorandom by a hybrid
  argument.
  By the quantum security of \prf, the family is computationally
  indistinguishable from a similar state family defined by truly
  random functions.
  We then prove that, this state family corresponding to truly random
  functions is statistically indistinguishable from Haar random
  states.
  Finally, by the fact that PRF exists assuming quantum-secure one-way
  functions, we can base our PRS construction on quantum-secure
  one-way functions.

\item We establish interesting properties of PRS's and discuss several
  applications.
  These include the \emph{cryptographic non-cloning theorems} for
  PRS's and the construction of private-key quantum money schemes
  based on PRS's.

  We prove that a PRS remains pseudorandom, even if we additionally
  give the distinguisher an oracle that reflects about the given state
  (i.e., $O_\phi: = \I - 2\ket{\phi}\bra{\phi}$).
  This establishes the equivalence between the standard and a strong
  definition of PRS's.
  Technically, this is proved using the fact that with polynomially
  many copies of the state, one can approximately simulate the
  reflection oracle $O_\phi$.

  We obtain general \emph{cryptographic non-cloning theorems} of PRS's
  both with and without the reflection oracle.
  The theorems roughly state that given any polynomially many copies of
  pseudorandom states, no polynomial-time quantum algorithm can
  produce even one more copy of the state.
  We call them cryptographic non-cloning theorems due to the
  background of PRS's in pseudorandomness.
  The proofs of these theorems use SWAP tests in the reduction from a
  hypothetical cloning algorithm to an efficient distinguishing
  algorithm violating the definition of PRS's.

  Using the strong pseudorandomness and the cryptographic non-cloning
  theorem with reflection oracle, we are able to give simple proofs
  that the corresponding private-key quantum money scheme is secure.
  We stress that provably secure quantum money schemes had been
  elusive until Aaronson and Christiano finally proved the first
  secure scheme in the black-box setting in 2012~\cite{AC12}.  They
  used a specific construction based on hidden subspace state, whereas
  our construction is more generic and can be based on any PRS.  The
  freedom to choose and tweak the underlying pseudorandom functions or
  permutations in the PRS may motivate and facilitate the construction
  of public-key quantum money schemes in future work.  Our proof also
  takes an arguably simpler route than that in~\cite{AC12}.

  In general, PRS's may be used in place of high-order quantum
  $t$-designs, giving a performance improvement in certain
  applications.
  For example, pseudorandom states can be used to construct toy models
  of quantum \emph{thermalization}, where one is interested in quantum states
  that can be prepared efficiently (via some dynamical process), yet
  have ``generic'' or ``typical'' properties (as exemplified by
  Haar-random pure states, for instance)~\cite{PSW06}.
  Using $t$-designs with polynomially large $t$, one can construct
  states that are ``generic'' in a strong information-theoretic
  sense~\cite{Low09}.
  Using PRS, one can construct states that satisfy a weaker property:
  they are computationally indistinguishable from ``generic'' states,
  for a polynomial-time observer.
  But the PRS states may be more physically plausible, because they
  can be prepared in a shorter time (e.g., by a polylogarithmic-depth
  quantum circuit).

\item We propose a definition of \emph{quantum pseudorandom unitary
    operators} (PRUs).
  We also present candidate constructions of PRUs (without a proof of
  security), by extending our techniques for constructing PRS's.

\end{enumerate}

\begin{table}[htb!]
  \centering
  \begin{tabular}{c c c}
    \toprule
    & \textbf{Classical} & \textbf{Quantum}\\
    \midrule
    Ideal & Uniform & Haar measure\\
    \midrule
    $t$-wise independence & $t$-wise independence & quantum $t$-designs \\
    \midrule
    & & (\textit{this work})\\
    Pseudorandom & PRGs & PRS's \\
    & PRFs, PRPs & PRUs \\
    \bottomrule

  \end{tabular}
  \caption{Summary of various notions that approximate perfect
    randomness}
  \label{tab:summary}
\end{table}


\noindent
\textit {Discussion}.
We summarize the mentioned variants of randomness in
Table~\ref{tab:summary}.
The focus of this work is mostly about PRS's and we briefly touch upon
PRUs.
We view our work as an initial step in an exciting direction and
expect more applications and new questions inspired by our notion of
pseudorandom states and unitary operators.

Here, we point out some open problems which are particularly important
and interesting.
First, can we prove the security of our candidate PRU constructions?
The techniques developed in quantum unitary designs~\cite{HL09,BHH16}
seem helpful.
Second, it is a natural question to ask whether quantum-secure one-way
functions are necessary for the construction of PRS's.
Third, can we establish security proofs for more candidate
constructions of PRS's?
Different constructions may have their own special properties that may
be useful in different settings.
Finally, it is interesting to explore whether our quantum money
construction may be adapted to a public-key money scheme under
reasonable cryptographic assumptions.

\section{Preliminaries}
\label{sec:prelim}


\subsection{Notions}

For a finite set $\X$, $\abs{\X}$ denotes the number of elements in
$\X$.
We use the notion $\Y^\X$ to denote the set of all functions
$f:\X\rightarrow\Y$.
For finite set $\X$, we use $x\gets\X$ to mean that $x$ is drawn
uniformly at random from $\X$.
The permutation group over elements in $\X$ is denoted as $S_\X$.
We use $\poly(\secpar)$ to denote the collection of polynomially
bounded functions of the security parameter $\secpar$, and use
$\negl(\secpar)$ to denote negligible functions in $\secpar$.
A function $\epsilon(\secpar)$ is \emph{negligible} if for all constant
$c>0$, $\epsilon(\secpar) < \secpar^{-c}$ for large enough $\secpar$.

In this paper, we use a \emph{quantum register} to name a collection
of qubits that we view as a single unit.
Register names are represented by capital letters in a \emph{sans
  serif} font.
We use $\State(\H)$, $\Density(\H)$, $\Unitary(\H)$ and $\Lin(\H)$ to
denote the set of pure quantum states, density operators, unitary
operators and bounded linear operators on space $\H$ respectively.
An ensemble of states $\{(p_i, \rho_i)\}$ represents a system prepared
in $\rho_i$ with probability $p_i$.
If the distribution is uniform, we write the ensemble as $\{\rho_i\}$.
The adjoint of matrix $M$ is denoted as $M^*$.
For matrix $M$, $\abs{M}$ is defined to be $\sqrt{M^* M}$.
The operator norm $\norm{M}$ of matrix $M$ is the largest eigenvalue
of $\abs{M}$.
The trace norm $\norm{M}_1$ of $M$ is the trace of $\abs{M}$.
For two operators $M,N\in\Lin(\H)$, the Hilbert-Schmidt inner product
is defined as
\begin{equation*}
  \ip{M}{N} = \tr(M^*N).
\end{equation*}


A quantum channel is a physically admissible transformation of quantum
states.
Mathematically, a quantum channel
\begin{equation*}
  \mathcal{E}:\Lin(\H) \rightarrow \Lin(\K)
\end{equation*}
is a completely positive, trace-preserving linear map.

The trace distance of two quantum states $\rho_0, \rho_1 \in \Density(\H)$
is
\begin{equation}
  \label{eq:dtr}
  \dtr(\rho_0, \rho_1) \defeq \frac{1}{2} \norm{\rho_0 - \rho_1}_1.
\end{equation}
It is well known that for a state drawn uniformly at random from the
set $\{\rho_0, \rho_1\}$, the optimal distinguish probability is given by
\begin{equation*}
  \frac{1 + \dtr(\rho_0, \rho_1)}{2}.
\end{equation*}

Define number $N = 2^n$ and set $\X = \{0,1,\ldots,N-1\}$.
The quantum Fourier transform on $n$ qubits is defined as
\begin{equation}
  \label{eq:Fourier}
  F = \frac{1}{\sqrt{N}} \sum_{x,y\in\X} \omega_N^{xy} \ket{x}\bra{y}.
\end{equation}
It is a well-known fact in quantum computing that $F$ can be
implemented in time $\poly(n)$.

For Hilbert space $\H$ and integer $m$, we use $\vee^m \H$ to denote the
symmetric subspace of $\H^{\otimes m}$, the subspace of states that are
invariant under permutations of the subsystems.
Let $N$ be the dimension of $\H$ and let $\X$ be the set
$\{0,1,\ldots,N-1\}$ such that $\H$ is the span of $\{ \ket{x}
\}_{x\in\X}$.
For any $\x=(x_1,x_2,\ldots,x_m)\in\X^m$, define state
\begin{equation}
  \label{eq:symmetric-state-1}
  \bigket{\x;\Sym} = \sqrt{\frac{\prod_{j\in\X} m_j!}{m!}} \sum_{\sigma}
  \Bigket{x_{\sigma(1)}, x_{\sigma(2)}, \ldots, x_{\sigma(m)}}.
\end{equation}
The summation runs over all possible permutations $\sigma$ that give
different tuples $(x_{\sigma(1)}, x_{\sigma(2)}, \ldots, x_{\sigma(m)})$.
Equivalently, we have
\begin{equation}
  \label{eq:symmetric-state-2}
  \bigket{\x;\Sym} = \frac{1}{\sqrt{m! \prod_{j\in\X}m_j!}} \sum_{\sigma\in S_m}
  \Bigket{x_{\sigma(1)}, x_{\sigma(2)}, \ldots, x_{\sigma(m)}}.
\end{equation}
The coefficients in the front of the above two equations are
normalization constants.
The set of states
\begin{equation*}
  \bigl\{ \bigket{\x;\Sym} \bigr\}_{\x\in\X^m}
\end{equation*}
forms an orthonormal basis of the symmetric subspace $\vee^m \H$.
This implies that the dimension of the symmetric subspace is
\begin{equation*}
  \binom{N+m-1}{m}.
\end{equation*}

Let $\Pi^\Sym_m$ be the projection onto the symmetric subspace $\vee^m
\H$.
For a permutation $\sigma \in S_m$, define operator
\begin{equation*}
  W_\sigma = \sum_{x_1,x_2,\ldots,x_m \in \X} \bigket{x_{\sigma^{-1}(1)}, x_{\sigma^{-1}(2)},
    \ldots, x_{\sigma^{-1}(m)}} \bigbra{x_1, x_2, \ldots, x_m}.
\end{equation*}
We have the following useful identity
\begin{equation}
  \label{eq:symmetric-sum}
  \Pi^\Sym_m = \frac{1}{m!} \sum_{\sigma \in S_m} W_\sigma.
\end{equation}
Let $\mu$ be the Haar measure on $\State(\H)$, we have
\begin{equation*}
  \int \bigl( \ket{\psi}\bra{\psi} \bigr)^{\otimes m} \dif \mu(\psi) =
  \binom{N+m-1}{m}^{-1} \Pi^\Sym_m.
\end{equation*}


\subsection{Cryptography}

In this section, we recall several definitions and results from
cryptography that is necessary for this work.

Pseudorandom functions (PRF) and pseudorandom permutations (PRP) are
important constructions in classical cryptography.
Intuitively, they are families of functions or permutations that looks
like truly random functions or permutations to polynomial-time
machines.
In the quantum case, we need a strong requirement that they still look
random even to polynomial-time quantum algorithms.


\begin{definition}[Quantum-Secure Pseudorandom Functions and Permutations]
  Let $\K$, $\X$, $\Y$ be the key space, the domain and range, all
  implicitly depending on the security parameter $\secpar$.
  A keyed family of functions $\bigl\{ \prf_k : \X \rightarrow \Y
  \bigr\}_{k\in\K}$ is a quantum-secure pseudorandom function (QPRF) if
  for any polynomial-time quantum oracle algorithm $\A$, $\prf_k$ with
  a random $k \gets \K$ is indistinguishable from a truly random function
  $f \gets \Y^\X$ in the sense that:
  \begin{equation*}
    \abs{ \Pr_{k\gets\K} \bigl[ \A^{\prf_k}(\secparam) = 1 \bigr] -
      \Pr_{f\gets\Y^\X} \bigl[ \A^{f}(\secparam) = 1 \bigr]} =
    \negl(\secpar).
  \end{equation*}
  Similarly, a keyed family of permutations $\bigl\{ \prp_k \in S_\X
  \bigr\}_{k\in \K}$ is a quantum-secure pseudorandom permutation
  (QPRP) if for any quantum algorithm $\A$ making at most polynomially
  many queries, $\prp_k$ with a random $k\gets\K$ is indistinguishable
  from a truly random permutation in the sense that:
  \begin{equation*}
    \abs{\Pr_{k \gets \K} \bigl[ \A^{\prp_k}(\secparam) = 1 \bigr] -
      \Pr_{P \gets S_\X} \bigl[ \A^{P}(\secparam) = 1 \bigr]} =
    \negl(\secpar).
  \end{equation*}
  In addition, both $\prf_k$ and $\prp_k$ are polynomial-time
  computable (on a classical computer).
  \label{def:prfprp}
\end{definition}

\begin{fact}
  QPRFs and QPRPs exist if quantum-secure one-way functions exist.
\end{fact}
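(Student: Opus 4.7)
The plan is to establish the fact by following the well-trodden classical pipeline OWF $\Rightarrow$ PRG $\Rightarrow$ PRF $\Rightarrow$ PRP, and checking at each stage that the reduction and its analysis remain sound when the distinguisher is a polynomial-time quantum algorithm with superposition query access. All three stages are available off the shelf in the post-quantum cryptography literature, so the work is primarily one of assembling and citing; the task is to identify the place where the quantum setting actually requires a new idea.

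The first stage is OWF $\Rightarrow$ PRG. Given a quantum-secure one-way function $f$, I would apply the Goldreich--Levin construction to obtain a quantum hardcore predicate. The Goldreich--Levin reduction is a black-box reduction to inversion, and it is known to lift to quantum adversaries via a quantum list-decoding argument for the Hadamard code. Feeding this hardcore bit into the standard HILL construction~\cite{HILL99} then yields a quantum-secure PRG: since HILL's analysis reduces distinguishing the PRG output to predicting the hardcore bit using only oracle access to the distinguisher, the reduction transports verbatim to the quantum setting.

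The second stage, PRG $\Rightarrow$ PRF via the GGM tree~\cite{GGM86}, is the step I expect to be the main obstacle and the only one where genuinely quantum techniques are needed. Classically one proceeds by a hybrid over the polynomially many queries of the adversary, but a quantum adversary can query the function in superposition over all $2^n$ inputs at once, so one cannot ``fix the query set'' between hybrids in the naive way. I would invoke Zhandry's analysis~\cite{Zha12}, which resolves this via the \emph{small-range distribution} technique: a truly random function is indistinguishable (against $q$ quantum queries) from a random function whose image is restricted to a polynomially large set, after which one can in effect hybridize over that small image and thus over the internal PRG invocations in the GGM tree. Applied to GGM, this yields a QPRF from any quantum-secure PRG, and hence from quantum-secure OWFs by the previous step.

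The third stage, QPRF $\Rightarrow$ QPRP, follows from the shuffling and Feistel-style constructions shown to remain secure against superposition queries in~\cite{Zha16,Song17}: one starts with a QPRF and instantiates an appropriate constant-round Feistel or card-shuffle network, whose analysis again requires quantum-query techniques but is already carried out in those papers. Chaining the three reductions produces both QPRFs and QPRPs from any quantum-secure one-way function. The only conceptually new ingredient relative to the classical proofs is the handling of superposition queries in the GGM step (and, analogously, in the shuffling step); the remaining pieces are essentially direct transcriptions of their classical counterparts once the intermediate primitives are verified to satisfy the stronger quantum security notion of Definition~\ref{def:prfprp}.
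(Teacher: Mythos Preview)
Your proposal is correct and follows essentially the same route as the paper, which simply cites Zhandry~\cite{Zha12} for the implication OWF $\Rightarrow$ QPRF (encapsulating your first two stages) and~\cite{Zha16,Song17} for the shuffling-based QPRF $\Rightarrow$ QPRP step. The paper adds only the side remark that QPRFs and QPRPs are in fact equivalent via the quantum indistinguishability of random functions and random permutations, but otherwise your more detailed unpacking of the pipeline matches the paper's argument.
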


Zhandry proved the existence of QPRFs assuming the existence of
one-way functions that are hard to invert even for quantum
algorithms~\cite{Zha12}.
Assuming QPRF, one can construct QPRP using various \emph{shuffling}
constructions~\cite{Zha16,Song17}.
Since a random permutation and a random function is indistinguishable
by efficient quantum algorithms~\cite{Yuen14,Zha15}, existence of QPRP
is hence equivalent to existence of QPRF.


\section{Pseudorandom Quantum States}
\label{sec:prs}

In this section, we will discuss the definition and constructions of
pseudorandom quantum states.

\subsection{Definition of Pseudorandom States}
\label{sec:definition}


Intuitively speaking, a family pseudorandom quantum states are a set
of random states $\bigl\{ \ket{\phi_k} \bigr\}_{k\in\K}$ that is
indistinguishable from Haar random quantum states.

The first idea on defining pseudorandom states can be the following.
Without loss of generality, we consider states in $\State(\H)$ where
$\H = (\complex^2)^{\otimes n}$ is the Hilbert space for $n$-qubit systems.
We are given either a state randomly sampled from the set $\bigl\{
\ket{\phi_k} \in \H \bigr\}_{k\in\K}$ or a state sampled according to the
Haar measure on $\State(\H)$, and we require that no efficient quantum
algorithm will be able to tell the difference between the two case.

However, this definition does not seem to grasp the quantum nature of
the problem.
First, the state family where each $\ket{\phi_k}$ is a uniform random
bit string will satisfy the definition---in both cases, the mixed states
representing the ensemble are $\I/2^n$.
Second, many of the applications that we can find for PRS's will not
hold for this definition.

Instead, we require that the family of states looks random even if
polynomially many copies of the state are given to the distinguishing
algorithm.
We argue that this is the more natural way to define pseudorandom
states.
One can see that this definition also naturally generalizes the
definition of pseudorandomness in the classical case to the quantum
setting.
In the classical case, asking for more copies of a string is always
possible and one does not bother making this explicit in the
definition.
This of course also rules out the example of classical random bit
strings we discussed before.
Moreover, this strong definition, once established, is rather flexible
to use when studying the properties and applications of pseudorandom
states.

\begin{definition}[Pseudorandom Quantum States (PRS's)]
  \label{def:prs}
  Let $\H$ be a Hilbert space and $\K$ the key space.
  $\H$ and $\K$ depend on the security parameter $\secpar$.
  A keyed family of quantum states $\bigl\{ \ket{\phi_k} \in \State(\H)
  \bigr\}_{k\in\K}$ is \textbf{pseudorandom}, if the following two
  conditions hold:
  \begin{enumerate}
  \item (\textbf{Efficient generation}).
    There is a polynomial-time quantum algorithm $G$ that generates
    state $\ket{\phi_k}$ on input $k$.
    That is, for all $k\in\K$, $G(k) = \ket{\phi_k}$.
  \item (\textbf{Pseudorandomness}).
    Any polynomially many copies of $\ket{\phi_k}$ with the same random
    $k\in\K$ is \textbf{computationally indistinguishable} from the
    same number of copies of a Haar random state.
    More precisely, for any efficient quantum algorithm $\A$ and any
    $m\in\poly(\secpar)$,
    \begin{equation*}
      \abs{\Pr_{k\gets\K} \bigl[ \A(\ket{\phi_k}^{\otimes m}) = 1 \bigr] -
        \Pr_{\ket{\psi}\gets \mu} \bigl[ \A(\ket{\psi}^{\otimes m}) = 1 \bigr] } =
      \negl(\secpar),
    \end{equation*}
    where $\mu$ is the Haar measure on $\State(\H)$.
  \end{enumerate}
\end{definition}

\subsection{Constructions and Analysis}
\label{sec:construction}

In this section, we give an efficient construction of pseudorandom
states which we call random phase states.
We will prove that this family of states satisfies our definition of
PRS's.
There are other interesting and simpler candidate constructions, but
the family of random phase states is the easiest to analyze.


Let $\prf:\K \times \X \rightarrow \X$ be a quantum-secure pseudorandom function
with key space $\K$, $\X = \{0, 1, 2, \ldots, N-1\}$ and $N=2^n$.
$\K$ and $N$ are implicitly functions of the security parameter
$\secpar$.
The family of pseudorandom states of $n$ qubits is defined
\begin{equation}
  \label{eq:random-phase-states}
  \ket{\phi_k} = \frac{1}{\sqrt{N}} \sum_{x\in \X} \omega_N^{\prf_k(x)}
  \ket{x},
\end{equation}
for $k \in \K$ and $\omega_N = \exp(2\pi i/N)$.


\begin{theorem}
  For any QPRF $\prf:\K \times \X \rightarrow \X$, the family of states
  $\{\ket{\phi_k}\}_{k\in\K}$ defined in
  Eq.~\eqref{eq:random-phase-states} is a PRS.
\end{theorem}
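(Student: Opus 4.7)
My plan is a three-hybrid argument. Let $H_1$ denote the distribution of $\ket{\phi_k}^{\otimes m}$ for a uniformly random key $k \gets \K$, let $H_2$ denote the distribution of $\ket{\phi_f}^{\otimes m}$ where $\ket{\phi_f} = \frac{1}{\sqrt{N}} \sum_x \omega_N^{f(x)} \ket{x}$ for a uniformly random function $f \gets \X^\X$, and let $H_3$ denote $m$ copies of a Haar-random state. I need to show $H_1 \approx_c H_2$ computationally and $H_2 \approx_s H_3$ statistically; composing gives the definition. Efficient generation will be immediate: prepare the uniform superposition and a Fourier basis ancilla $\frac{1}{\sqrt{N}}\sum_y \omega_N^{-y}\ket{y}$, apply the oracle $\ket{x,y}\mapsto\ket{x, y+\prf_k(x) \bmod N}$, and read off $\ket{\phi_k}$ in the first register by phase kickback; the ancilla is reusable across copies.

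The step $H_1 \approx_c H_2$ is a direct reduction to the quantum security of $\prf$. Given any polynomial-time distinguisher $\A$ for $H_1$ vs $H_2$ making some $m\in\poly(\secpar)$ copies, I build a QPRF adversary $\B^{(\cdot)}$ that on oracle access to $g$ (either $\prf_k$ or a truly random $f$) prepares $m$ copies of $\ket{\phi_g}$ via the phase-kickback circuit above (one oracle call per copy), then runs $\A$ on them. Its distinguishing advantage equals that of $\A$, so by Definition~\ref{def:prfprp} the latter is negligible.

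The step $H_2 \approx_s H_3$ is the statistical heart of the proof. I will compute the two $m$-copy averaged density operators and bound their trace distance directly. For $H_3$, the identity stated in the preliminaries gives $\rho_{H_3} = \binom{N+m-1}{m}^{-1}\Pi^\Sym_m$. For $H_2$,
\begin{equation*}
\rho_{H_2} = \E_f \ket{\phi_f}\bra{\phi_f}^{\otimes m} = \frac{1}{N^m}\sum_{\x,\y\in\X^m} \E_f\bigl[\omega_N^{\sum_i f(x_i) - \sum_i f(y_i)}\bigr]\ket{\x}\bra{\y}.
\end{equation*}
Because $f(z)$ is independent and uniform on $\mathbb{Z}_N$ for each $z\in\X$, the expectation factorizes as $\prod_z \E[\omega_N^{f(z)(n_z^\x - n_z^\y)}]$, which equals $1$ when the type of $\x$ equals the type of $\y$ (using $m < N$, which holds since $m=\poly(\secpar)$ and $N=2^n$) and $0$ otherwise. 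Rewriting the sum in the symmetric-basis of Eq.~(\ref{eq:symmetric-state-2}) yields
\begin{equation*}
\rho_{H_2} = \sum_T \frac{m!/\prod_z m_z^T!}{N^m}\,\ket{T;\Sym}\bra{T;\Sym},
\end{equation*}
where $T$ ranges over types and $m_z^T$ are its multiplicities. So $\rho_{H_2}$ and $\rho_{H_3}$ are both diagonal in the symmetric-type basis, and their trace distance equals one half the $\ell_1$-distance of the two weight distributions on types.

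The remaining task---which I expect to be the main (though standard) obstacle---is to bound this $\ell_1$ distance. The key observation is that collision-free types (those with all $m_z^T\in\{0,1\}$) have weight exactly $m!/N^m$ under $\rho_{H_2}$ and weight $\binom{N+m-1}{m}^{-1}$ under $\rho_{H_3}$, and both are within a $1 + O(m^2/N)$ multiplicative factor of each other, while the total $\rho_{H_2}$-weight of colliding types is a birthday-bound $O(m^2/N)$. Since $m=\poly(\secpar)$ and $N=2^n$ is exponential in $\secpar$, this contributes at most $\negl(\secpar)$ to the trace distance, completing the statistical step and hence the theorem.
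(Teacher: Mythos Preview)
Your proposal is correct and follows essentially the same approach as the paper: the same three-hybrid structure, the same phase-kickback preparation, the same reduction to QPRF security for $H_1\approx_c H_2$, and the same computation showing $\rho_{H_2}$ is diagonal in the symmetric-type basis with weights $m!/(N^m\prod_j m_j!)$ compared against the Haar average $\binom{N+m-1}{m}^{-1}\Pi^\Sym_m$. The only cosmetic difference is in the final trace-distance estimate: the paper observes that the per-type difference $\delta_{\x;\Sym}$ is positive exactly on collision-free types (for large $\secpar$) and evaluates the sum in closed form, whereas you bound it via a birthday-style $O(m^2/N)$ argument---both yield the same $\negl(\secpar)$ conclusion.
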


\begin{proof}
  First, we prove that the state can be efficiently prepared with a
  single query to $\prf_k$.
  As $\prf_k$ is efficient, this proves the efficient generation
  property.

  The state generation algorithm works as follows.
  First, it prepares a state
  \begin{equation*}
    \frac{1}{N} \sum_{x\in\X} \ket{x} \sum_{y\in\X} \omega_N^y \ket{y}.
  \end{equation*}
  This can be done by applying $H^{\otimes n}$ to the first register
  initialized in $\ket{0}$ and the quantum Fourier transform to the
  second register in state $\ket{1}$.

  Then the algorithm calls $\prf_k$ on the first register and subtract
  the result from the second register, giving state
  \begin{equation*}
    \frac{1}{N} \sum_{x\in\X} \ket{x} \sum_{y\in\X} \omega_N^y \bigket{y - \prf_k(x)}.
  \end{equation*}
  The state can be rewritten as
  \begin{equation*}
    \frac{1}{N} \sum_{x\in\X} \omega_N^{\prf_k(x)} \ket{x} \sum_{y\in\X} \omega_N^y \ket{y}.
  \end{equation*}
  Therefore, the effect of this step is to transform the first
  register to the required form and leaving the second register
  intact.

  Next, we prove the pseudorandomness property of the family.
  For this purpose, we consider three hybrids.
  In the first hybrid $H_1$, the state will be $\ket{\phi_k}^{\otimes m}$ for
  a uniform random $k\in\K$.
  In the second hybrid $H_2$, the state is $\ket{f}^{\otimes m}$ for truly
  random functions $f\in\X^\X$ where
  \begin{equation*}
    \ket{f} = \frac{1}{\sqrt{N}} \sum_{x\in \X} \omega_N^{f(x)} \ket{x}.
  \end{equation*}
  In the third hybrid $H_3$, the state is $\ket{\psi}^{\otimes m}$ for
  $\ket{\psi}$ chosen according to the Haar measure.

  By the definition of the quantum-secure pseudorandom functions for
  $\prf$, we have for any polynomial-time quantum algorithm $\A$ and
  any $m\in\poly(\secpar)$,
  \begin{equation*}
    \abs{\Pr\bigl[ \A(H_1) = 1 \bigr] - \Pr\bigl[ \A(H_2) = 1 \bigr]} =
    \negl(\secpar).
  \end{equation*}

  By Lemma~\ref{lem:random-phase}, we have for any algorithm $\A$ and
  $m\in\poly(\secpar)$,
  \begin{equation*}
    \abs{\Pr\bigl[ \A(H_2) = 1 \bigr] - \Pr\bigl[ \A(H_3) = 1 \bigr]} =
    \negl(\secpar).
  \end{equation*}

  This completes the proof by triangle inequality.
\end{proof}


\begin{lemma}
  \label{lem:random-phase}
  For function $f:\X \rightarrow \X$, define quantum state
  \begin{equation*}
    \ket{f} = \frac{1}{\sqrt{N}} \sum_{x\in \X} \omega_N^{f(x)} \ket{x}.
  \end{equation*}
  For $m\in\poly(\secpar)$, the state ensemble $\bigl\{ \ket{f}^{\otimes m}
  \bigr\}$ is statistically indistinguishable from $\bigl\{
  \ket{\psi}^{\otimes m} \bigr\}$ for Haar random $\ket{\psi}$.
\end{lemma}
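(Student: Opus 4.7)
The plan is to bound the trace distance $\dtr(\rho_f, \rho_\mu)$ between the two $m$-copy density matrices
$\rho_f := \E_{f\gets \X^\X}\bigl[(\ket{f}\bra{f})^{\otimes m}\bigr]$ and
$\rho_\mu := \E_{\ket{\psi}\gets \mu}\bigl[(\ket{\psi}\bra{\psi})^{\otimes m}\bigr]$
by an explicit $O(m^2/N)$ estimate. Since $N = 2^n$ grows super-polynomially in the security parameter while $m = \poly(\secpar)$, this quantity is negligible, and statistical indistinguishability of the two ensembles follows from the operational characterization of trace distance (even against unbounded distinguishers given $m$ copies). The key structural observation is that both matrices turn out to be diagonal in the orthonormal basis $\{\ket{\x;\Sym}\}$ of the symmetric subspace $\vee^m \H$, so the trace distance collapses to a total variation distance between two probability distributions on type classes.

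First I would compute $\rho_f$ explicitly. Writing $\ket{f}^{\otimes m} = N^{-m/2}\sum_{\x \in \X^m} \omega_N^{f(x_1)+\cdots+f(x_m)}\ket{\x}$ and averaging over uniform $f$, the coefficient of $\ket{\x}\bra{\y}$ becomes $\prod_{z\in\X} \E_{f(z)}\bigl[\omega_N^{f(z)(m_z^\x - m_z^\y)}\bigr]$, where $m_z^\x$ denotes the multiplicity of $z$ in $\x$ and we use independence of the $f(z)$'s. Each factor vanishes unless $m_z^\x \equiv m_z^\y \pmod N$, and since $m < N$ this is equivalent to $\y$ being a permutation of $\x$. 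Grouping by permutation orbits $T_\x$ and using $\sum_{\x'\in T_\x}\ket{\x'} = \sqrt{|T_\x|}\,\ket{\x;\Sym}$, which is immediate from Eq.~\eqref{eq:symmetric-state-1}, gives
\begin{equation*}
\rho_f = \sum_{[\x]} p(\x)\,\ket{\x;\Sym}\bra{\x;\Sym}, \qquad p(\x) = \frac{|T_\x|}{N^m} = \frac{m!}{N^m \prod_{z} m_z^\x!},
\end{equation*}
where the sum runs over distinct type classes. The preliminaries supply $\rho_\mu = \binom{N+m-1}{m}^{-1}\Pi^\Sym_m = \sum_{[\x]} q(\x)\,\ket{\x;\Sym}\bra{\x;\Sym}$ with $q(\x) = \binom{N+m-1}{m}^{-1}$ uniform on type classes.

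Since both density matrices are diagonal in a common orthonormal basis, $\dtr(\rho_f,\rho_\mu) = \tfrac{1}{2}\sum_{[\x]} |p(\x)-q(\x)|$. I would split the type classes into the collision-free ones $D$ (all $x_i$ distinct, so $|T_\x|=m!$) and the rest $C$. A direct birthday-style computation gives $\sum_{[\x]\in C} p(\x) = 1 - \prod_{j=0}^{m-1}(1-j/N) = O(m^2/N)$ and $\sum_{[\x]\in C} q(\x) = 1 - \binom{N}{m}/\binom{N+m-1}{m} = O(m^2/N)$. On $D$, the ratio is the same for every type class: $p(\x)/q(\x) = \prod_{j=0}^{m-1}(1+j/N) = 1 + O(m^2/N)$, whence $\sum_{[\x]\in D}|p(\x)-q(\x)| = O(m^2/N)$. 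Combining yields $\dtr(\rho_f,\rho_\mu) = O(m^2/N)$ and the lemma follows.

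The main obstacle is the bookkeeping in the first step: verifying that the off-diagonal matrix elements of $\rho_f$ collapse cleanly onto the symmetric-basis projectors with no cross-type contributions, which requires the assumption $m < N$ to promote the $\pmod N$ condition to genuine equality of multiplicities. Once this diagonalization is exposed, the trace-distance estimate reduces to a routine birthday bound.
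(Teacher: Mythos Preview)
Your proposal is correct and follows essentially the same approach as the paper: diagonalize both density matrices in the $\{\ket{\x;\Sym}\}$ basis, identify the diagonal entries as $p(\x)=m!/(N^m\prod_j m_j!)$ versus $q(\x)=\binom{N+m-1}{m}^{-1}$, and reduce the trace distance to a birthday-type estimate of order $O(m^2/N)$. The only cosmetic difference is that the paper exploits the sign structure (noting $p(\x)>q(\x)$ exactly on the collision-free types, so $\dtr=\sum_{[\x]\in D}(p(\x)-q(\x))$ can be computed in closed form) rather than bounding the $C$ and $D$ contributions separately as you do; both routes yield the same negligible bound.
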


\begin{proof}
  Let $m\in\poly(\secpar)$ be the number of copies of the state.
  We have
  \begin{equation*}
    \E_f \Bigl[ \bigl( \ket{f}\bra{f} \bigr)^{\otimes m} \Bigr] =
    \frac{1}{N^m} \sum_{\x \in \X^m, \y\in \X^m} \E_f
    \omega_N^{f(x_1) + \cdots + f(x_m) - [f(y_1) + \cdots + f(y_m)]}
    \bigket{\x}\bigbra{\y},
  \end{equation*}
  where $\x = (x_1, x_2, \ldots, x_m)$ and $\y = (y_1, y_2, \ldots, y_m)$.
  For later convenience, define density matrix
  \begin{equation*}
    \rho^m = \E_f \Bigl[ \bigl( \ket{f}\bra{f} \bigr)^{\otimes m} \Bigr].
  \end{equation*}
  We will compute the entries of $\rho^m$ explicitly.

  For $\x = (x_1, x_2, \ldots, x_m) \in \X^m$, let $m_j$ be the number of
  $j$ in $\x$ for $j\in\X$.
  Obviously, one has $\sum_{j\in\X} m_j = m$.
  Note that we have omitted the dependence of $m_j$ on $\x$ for
  simplicity.
  Recall the basis states defined in Eq.~\eqref{eq:symmetric-state-2}
  \begin{equation*}
    \bigket{\x;\Sym} =
    \frac{1}{\sqrt{\Bigl(\prod_{j\in\X}m_j! \Bigr) m!}} \sum_{\sigma\in S_m}
    \Bigket{x_{\sigma(1)}, x_{\sigma(2)}, \ldots, x_{\sigma(m)}}.
  \end{equation*}
  For $\x, \y \in \X^m$, let $m_j$ be the number of $j$ in $\x$ and
  $m_j'$ be the number of $j$ in $\y$.

  We can compute the entries of $\rho^m$ as
  \begin{equation*}
    \bigbra{\x;\Sym} \rho^m
    \bigket{\y;\Sym} = \frac{m!}{N^m \sqrt{\Bigl(\prod_{j\in\X} m_j!
        \Bigr) \Bigl(\prod_{j\in\X} m_j'! \Bigr)}} \E_f \biggl[ \exp\Bigl(
    \frac{2\pi i}{N} \sum_{l=1}^m \bigr( f(x_l) - f(y_l) \bigr) \Bigr)
    \biggr].
  \end{equation*}
  It is not hard to verify that the entry is nonzero only if $\x$ is a
  permutation of $\y$.
  These nonzero entries are on the diagonal of $\rho^m$ in the basis of
  $\bigl\{ \bigket{\x;\Sym} \bigr\}$.
  These diagonal entries are
  \begin{equation*}
    \bigbra{\x;\Sym} \rho^m
    \bigket{\x;\Sym} = \frac{m!}{N^m
      \prod_{j\in\X} m_j!}.
  \end{equation*}

  Let $\rho^m_\mu$ be the density matrix of a random $\ket{\psi}^{\otimes m}$ for
  $\ket{\psi}$ chosen from the Haar measure $\mu$.
  It is well-known that
  \begin{equation*}
    \rho^m_\mu = \binom{N+m-1}{m}^{-1} \sum_{\x;\Sym} \bigket{\x;\Sym}
    \bigbra{\x;\Sym}
  \end{equation*}

  We need to prove
  \begin{equation*}
    \dtr \bigl( \rho^m, \rho^m_\mu \bigr) = \negl(\secpar).
  \end{equation*}

  Define
  \begin{equation*}
    \delta_{\x;\Sym} = \frac{m!}{N^m \prod_{j\in\X} m_j!} -
    \binom{N+m-1}{m}^{-1}.
  \end{equation*}
  Then
  \begin{equation*}
    \dtr(\rho^m, \rho^m_\mu) = \frac{1}{2} \sum_{\x;\Sym} \abs{\delta_{\x;\Sym}}.
  \end{equation*}

  The ratio of the two terms in $\delta_{\x;\Sym}$ is
  \begin{equation*}
    \frac{\displaystyle m! \binom{N+m-1}{m}}{\displaystyle N^m
      \prod_{j\in\X} m_j!} = \frac{\displaystyle \prod_{l=0}^{m-1} \Bigl(
      1+\frac{l}{N} \Bigr)}{\displaystyle \prod_{j\in\X} m_j!}.
  \end{equation*}
  For sufficient large security parameter $\secpar$, the ratio is
  larger than $1$ only if $\prod_{j\in\X}m_j! = 1$, which corresponds to
  $\x$'s whose entries are all distinct.
  As there are $\binom{N}{m}$ such $\x$'s, we can calculate the trace
  distance as
  \begin{equation*}
    \begin{split}
      \dtr \bigl(\rho^m, \rho^m_\mu \bigr) & = \binom{N}{m} \biggl[
      \frac{m!}{N^m} -
      \binom{N+m-1}{m}^{-1} \biggr]\\
      & = \frac{N (N-1) \cdots (N-m+1)}{N^m} - \frac{N (N-1) \cdots
        (N-m+1)}{(N+m-1) \cdots N}.
    \end{split}
  \end{equation*}
  For $m\in\poly(\secpar)$, both terms in the last line of the equation
  is $1-\negl(\secpar)$ for sufficiently large security parameter
  $\secpar$, and this completes the proof.
\end{proof}

We remark that a similar family of states was considered
in~\cite{Aar09} (Theorem 3).
However, the size of the state family there depends on a parameter $d$
which should be larger than the sum of the number of state copies and
the number of queries.
In our construction, the key space is fixed for a given security
parameter, which may be advantageous for various applications.

We mention several other candidate constructions of PRS's and leave
detailed analysis of them to future work.
A construction closely related to the random phase states in
Eq.~\eqref{eq:random-phase-states} uses random $\pm 1$ phases,
\begin{equation*}
  \ket{\phi_k} = \frac{1}{\sqrt{N}} \sum_{x\in\X} (-1)^{\prf_k(x)} \ket{x}.
\end{equation*}
Intuitively, this family is less random than the random phase states
in Eq.~\eqref{eq:random-phase-states} and the corresponding density
matrix $\rho^m$ has small off-diagonal entries, making the proof more
challenging.
The other family of candidate states on $2n$ qubits takes the form
\begin{equation*}
  \ket{\phi_k} = \frac{1}{\sqrt{N}} \prp_k \biggl[ \sum_{x\in\X}
  \ket{x}\otimes\ket{0^n} \biggr].
\end{equation*}
In this construction, the state is an equal superposition of a random
subset of size $2^n$ of $\{0,1\}^{2n}$ and $\prp$ is any pseudorandom
permutation over the set $\{0,1\}^{2n}$.
We call this the \emph{random subset states} construction.

Our PRS constructions can be implemented using shallow quantum
circuits of polylogarithmic depth under appropriate cryptographic
assumptions.
To see this, note that there exist PRFs that can be computed in
polylogarithmic depth, which are based on lattice problems such as
``learning with errors'' (LWE)~\cite{Reg09}, and are believed to be
secure against quantum computers~\cite{BPR12}.
These PRFs can be used directly in our PRS construction.
(Alternatively, one can use low-depth PRFs that are constructed from
more general assumptions, such as the existence of trapdoor one-way
permutations~\cite{Nao99}.)

This shows that PRS states can be prepared in surprisingly small
depth, compared to quantum state $t$-designs, which generally require
at least linear depth (when $t$ is a constant greater than 2), or
polynomial depth (when $t$ grows polynomially with the number of
qubits)~\cite{AE07,BHH16}.
(Note, however, that for $t=2$, quantum state 2-designs can be
generated in logarithmic depth~\cite{CLLW16}.)
Moreover, PRS states are sufficient for many applications where
high-order $t$-designs are used, provided that one only requires
states to be \textit{computationally} (not statistically)
indistinguishable from Haar-random.

\section{Cryptographic Non-cloning Theorem and Quantum Money}
\label{sec:non-cloning}


A fundamental fact in quantum information theory is that unknown or
random quantum states cannot be
cloned~\cite{WZ82,Die82,Wer98,Ort17,Par70}.
The main topic of this section is to investigate the cloning problem
for pseudorandom states.
As we will see, even though pseudorandom states can be efficiently
generated, they do share the non-cloning property of generic quantum
states.

Let $\H$ be the Hilbert space of dimension $N$ and $m<m'$ be two
integers.
The numbers $N,m,m'$ depend implicitly on a security parameter
$\secpar$.
We will assume that $N$ is exponential in $\secpar$ and
$m\in\poly(\secpar)$ in the following discussion.


We first recall the fact that for Haar random state $\ket{\psi} \in
\State(\H)$, the success probability of producing $m'$ copies of the
state given $m$ copies is negligibly small.
Let $\C$ be a cloning channel that on input $(\ket{\psi}\bra{\psi})^{\otimes m}$
tries to output a state that is close to $(\ket{\psi}\bra{\psi})^{\otimes m'}$
for $m'>m$.
The expected success probability of $\C$ is measured by
\begin{equation*}
  \int \Bigl\langle \bigl( \ket{\psi}\bra{\psi} \bigr)^{\otimes m'}, \C \bigl( \bigl(
  \ket{\psi}\bra{\psi} \bigr)^{\otimes m} \bigr) \Bigr\rangle \dif \mu(\psi).
\end{equation*}
It is known that~\cite{Wer98}, for all cloning channel $\C$, this
success probability is bounded by
\begin{equation*}
  \binom{N+m-1}{m} \bigg/ \binom{N+m'-1}{m'},
\end{equation*}
which is $\negl(\secpar)$ for our choices of $N,m,m'$.


We establish a non-cloning theorem for PRS's which says that no
efficient quantum cloning procedure exists for a general PRS.
The theorem is called the cryptographic non-cloning theorem because of
its deep roots in pseudorandomness in cryptography.

\begin{theorem}[Cryptographic Non-cloning Theorem]
  \label{thm:non-cloning-simple}
  For any PRS $\{ \ket{\phi_k} \}_{k\in\K}$, $m\in\poly(\secpar)$, $m<m'$
  and any polynomial-time quantum algorithm $\C$, the success cloning
  probability
  \begin{equation*}
    \E_{k\in\K} \Bigl\langle \bigl( \ket{\phi_k}\bra{\phi_k} \bigr)^{\otimes m'}, \C
    \bigl( \bigl(\ket{\phi_k}\bra{\phi_k} \bigr)^{\otimes m} \bigr) \Bigr\rangle =
    \negl(\secpar).
  \end{equation*}
\end{theorem}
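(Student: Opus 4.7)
The plan is to prove Theorem~\ref{thm:non-cloning-simple} by a direct reduction that converts an efficient cloner for the PRS into a polynomial-time distinguisher, contradicting Definition~\ref{def:prs}. The underlying intuition is already supplied by the Werner bound $\binom{N+m-1}{m}/\binom{N+m'-1}{m'}$ quoted just before the theorem: Haar-random states admit essentially no $m\to m'$ cloning for our parameter regime, so any efficient procedure that does produce $m'$ good copies from $m$ copies must be exploiting non-Haar structure of the input ensemble, and is therefore a witness against pseudorandomness.

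Concretely, suppose for contradiction that
\begin{equation*}
  \E_{k\gets\K} \Bigl\langle \bigl( \ket{\phi_k}\bra{\phi_k} \bigr)^{\otimes m'},\; \C\bigl( \bigl(\ket{\phi_k}\bra{\phi_k}\bigr)^{\otimes m} \bigr) \Bigr\rangle \;\geq\; \epsilon(\secpar)
\end{equation*}
for some non-negligible $\epsilon$. I would build a distinguisher $\A$ that, on input $m+m'$ copies of an unknown state $\ket{\xi}$, feeds the first $m$ copies into $\C$ to obtain an output $\sigma$ on the Hilbert space $\H^{\otimes m'}$, and then performs a SWAP test between $\sigma$ and the remaining $m'$ copies $(\ket{\xi}\bra{\xi})^{\otimes m'}$, outputting $1$ iff the SWAP test accepts. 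Because $\C$ runs in polynomial time, its output register has polynomial size, so $m' \in \poly(\secpar)$; hence $\A$ is itself polynomial-time and uses only $m+m' \in \poly(\secpar)$ copies, within what Definition~\ref{def:prs} allows.

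A direct application of the standard SWAP-test identity $\Pr[\text{accept}] = \tfrac{1}{2}+\tfrac{1}{2}\tr(\sigma\tau)$, combined with linearity of expectation in $\xi$, gives
\begin{equation*}
  \Pr[\A(\ket{\xi}^{\otimes (m+m')}) = 1] \;=\; \tfrac{1}{2} + \tfrac{1}{2}\,\E_\xi \Bigl\langle \xi^{\otimes m'} \Bigl| \C\bigl( (\ket{\xi}\bra{\xi})^{\otimes m}\bigr) \Bigr| \xi^{\otimes m'} \Bigr\rangle.
\end{equation*}
On the PRS ensemble this is at least $\tfrac{1}{2}+\tfrac{1}{2}\epsilon(\secpar)$, while on a Haar-random input the Werner bound forces the same quantity to be at most $\tfrac{1}{2}+\tfrac{1}{2}\binom{N+m-1}{m}/\binom{N+m'-1}{m'} = \tfrac{1}{2}+\negl(\secpar)$. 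The gap is non-negligible, contradicting the pseudorandomness of $\{\ket{\phi_k}\}$.

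I do not expect a significant technical obstacle; the argument is a one-step hybrid reduction. The most delicate point to write out carefully is the interaction between the SWAP test and the expectation over $\ket{\xi}$: because the SWAP-test acceptance probability depends linearly on $\tr(\sigma\tau)$, the expectation can be pulled inside to turn ``average cloning fidelity of $\C$'' into ``acceptance probability of $\A$'' with no independence or correlation subtleties. The only routine bookkeeping items are verifying $m' \in \poly(\secpar)$ from the efficiency of $\C$, and confirming that $\binom{N+m-1}{m}/\binom{N+m'-1}{m'}$ is indeed negligible when $N = 2^n$ with $n$ polynomial in $\secpar$ and $m < m'$ polynomial.
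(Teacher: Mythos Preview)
Your proposal is correct and follows essentially the same route as the paper: build a distinguisher that feeds $m$ copies into $\C$, SWAP-tests the output against fresh reference copies, and invokes the Werner bound on the Haar side. The only cosmetic difference is that the paper compares against $m+1$ reference copies (implicitly reducing $m\to m'$ cloning to $m\to m+1$ by tracing out), whereas you use $m'$ reference copies directly and justify $m'\in\poly(\secpar)$ from the output-register size of $\C$.
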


\begin{proof}
  Assume on the contrary that there is a polynomial-time quantum
  cloning algorithm $\C$ such that the success cloning probability of
  producing $m+1$ from $m$ copies is $\secpar^{-c}$ for some constant
  $c>0$.
  We will construct a polynomial-time distinguisher $\D$ that violates
  the definition of PRS's.
  Distinguisher $\D$ will draw $2m+1$ copies of the state, call $\C$
  on the first $m$ copies, and perform the SWAP test on the output of
  $\C$ and the remaining $m+1$ copies.
  It is easy to see that $\D$ outputs $1$ with probability
  $(1+\secpar^{-c})/2$ if the input is from PRS, while if the input is
  Haar random, it outputs $1$ with probability $(1+\negl(\secpar))/2$.
  Since $\C$ is polynomial-time, it follows that $\D$ is also
  polynomial-time.
  This is a contradiction with the definition of PRS's and completes
  the proof.
\end{proof}


\subsection{A Strong notion of PRS and equivalence to PRS}
\label{sec:strong-prs}

In this section, we show that, somewhat surprisingly, PRS in fact
implies a seemingly stronger notion, where indistinguishability needs
to hold even if a distinguisher additionally has access to an oracle
that reflects about the given state.
There are at least a couple of motivations to consider an augmented
notion.
Firstly, unlike a classical string, a quantum state is inherently
\emph{hidden}.
Give a quantum register prepared in some state (i.e., a physical
system), we can only choose some observable to measure which just
reveals partial information and will collapse the state in general.
Therefore, it is meaningful to consider offering a distinguishing
algorithm more information \emph{describing} the given state, and the
reflection oracle comes naturally.
Secondly, this stronger notion is extremely useful in our application
of quantum money schemes, and could be interesting elsewhere too.

More formally, for any state $\ket{\phi}\in\H$, define an oracle $O_\phi:=
\I - 2\ket{\phi}\bra{\phi}$ that reflects about $\ket{\phi}$.



\begin{definition}[Strongly Pseudorandom Quantum States]
  \label{def:strong-prs}
  Let $\H$ be a Hilbert space and $\K$ be the key space.
  $\H$ and $\K$ depend on the security parameter $\secpar$.
  A keyed family of quantum states $\bigl\{ \ket{\phi_k} \in \State(\H)
  \bigr\}_{k\in\K}$ is \textbf{strongly pseudorandom}, if the following
  two conditions hold:
  \begin{enumerate}
  \item (\textbf{Efficient generation}).
    There is a polynomial-time quantum algorithm $G$ that generates
    state $\ket{\phi_k}$ on input $k$.
    That is, for all $k\in\K$, $G(k) = \ket{\phi_k}$.
  \item (\textbf{Strong Pseudorandomness}).
    Any polynomially many copies of $\ket{\phi_k}$ with the same random
    $k\in\K$ is \textbf{computationally indistinguishable} from the
    same number of copies of a Haar random state.
    More precisely, for any efficient quantum oracle algorithm $\A$
    and any $m\in\poly(\secpar)$,
    \begin{equation*}
      \abs{\Pr_{k\gets\K} \bigl[ \A^{O_{\phi_k}}(\ket{\phi_k}^{\otimes m}) = 1 \bigr] -
        \Pr_{\ket{\psi}\gets \mu} \bigl[ \A^{O_\psi}(\ket{\psi}^{\otimes m}) = 1 \bigr] } =
      \negl(\secpar),
    \end{equation*}
    where $\mu$ is the Haar measure on $\State(\H)$.
  \end{enumerate}
\end{definition}

Note that since the distinguisher $\A$ is polynomial-time, the number
of queries to the reflection oracle ($O_{\phi_k}$ or $O_\psi$) is also
polynomially bounded.


We prove the advantage that a reflection oracle may give to a
distinguisher is limited.
In fact, standard PRS implies strong PRS, and hence they are
equivalent.

\begin{theorem}
  \label{thm:equivalence}
  A family of states $\bigl\{ \ket{\phi_k} \bigr\}_{k\in \K}$ is strongly
  pseudorandom if and only if it is (standard) pseudorandom.
\end{theorem}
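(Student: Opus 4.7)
The plan is to prove both directions of the equivalence. The easy direction, that strong PRS implies standard PRS, is immediate: any standard-PRS distinguisher is a special oracle distinguisher that makes zero queries to the reflection oracle. The nontrivial direction, that standard PRS implies strong PRS, is what I focus on, and I proceed by contrapositive. Suppose a polynomial-time oracle algorithm $\A$ takes $m$ copies of $\ket{\phi_k}$ together with $q=\poly(\secpar)$ queries to $O_{\phi_k}$ and distinguishes from the Haar case with non-negligible advantage $\varepsilon$. I want to build an oracle-free distinguisher $\A'$ using only polynomially many copies and achieving advantage $\varepsilon-\negl(\secpar)$, contradicting the standard PRS property of $\bigl\{\ket{\phi_k}\bigr\}$.

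The heart of the argument is a reflection-simulation lemma: given $t$ auxiliary copies of an unknown pure state $\ket{\phi}$, there is an efficient quantum procedure that implements a unitary $\tilde O_\phi$ on a query register together with the copies, such that the induced channel on the query register approximates $O_\phi = \I - 2\ket{\phi}\bra{\phi}$ in trace distance, with simulation error $\delta(t)$ that decays polynomially in $1/t$. The natural construction is built from the projection $\Pi^\Sym_{t+1}$ onto the symmetric subspace of the query register together with the $t$ copies: the operator $\I - 2\Pi^\Sym_{t+1}$ is a genuine reflection, and on inputs of the form $\ket{\xi}\otimes\ket{\phi}^{\otimes t}$ one can show via the representation theory of $S_{t+1}$ (or equivalently via a generalized SWAP-test analysis) that, after an appropriate post-processing step, it acts on $\ket{\xi}$ as $O_\phi$ up to error $O(1/t)$. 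I expect this to be the main obstacle: one must carefully bound the back-action on the ancillary copies so that errors do not accumulate catastrophically across the $q$ oracle calls, which I would handle by either reusing the copies with a disturbance bound (via gentle-measurement-type arguments) or, more simply, by allocating a fresh block of $t$ copies for each query.

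Once the simulation lemma is in hand, the reduction is direct. The distinguisher $\A'$ requests $m + qt$ copies of its input state, which remains polynomial in $\secpar$ once $t=\poly(\secpar)$ is chosen. It runs $\A$ on the first $m$ copies, and whenever $\A$ issues a query to the reflection oracle, $\A'$ answers it by invoking the simulation lemma with a fresh bundle of $t$ auxiliary copies. A standard hybrid argument over the $q$ simulated queries, together with the triangle inequality for trace distance, shows that the acceptance probabilities of $\A'$ on PRS and on Haar-random inputs differ by at least $\varepsilon - q\cdot\delta(t)$, which can be made non-negligible by choosing $t$ large enough. Since the resulting $\A'$ is an efficient oracle-free distinguisher with non-negligible advantage, this contradicts Definition~\ref{def:prs} and completes the proof.
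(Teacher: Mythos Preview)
Your proposal is correct and follows essentially the same approach as the paper: the key simulation lemma is exactly the paper's Theorem~\ref{thm:simulation}, which replaces each call to $O_\phi$ by the reflection $\I-2\Pi^\Sym_{l+1}$ acting on the query register together with $l$ auxiliary copies of $\ket{\phi}$, and the reduction then proceeds by the same triangle-inequality argument you outline. The only notable difference is that the paper \emph{reuses} the same block of $l$ copies for all $q$ queries and controls the accumulated error by an induction on the global state vector (obtaining $\|\ket{\Psi_A^q}-\ket{\Psi_B^q}\|\le 2q/\sqrt{l+1}$ directly, with no post-processing and no gentle-measurement reasoning), whereas you opt for the simpler but less copy-efficient route of allocating a fresh block of $t$ copies per query; both yield a polynomial number of copies and a negligible simulation error, so either choice suffices.
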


\begin{proof} Clearly a strong PRS is also a standard PRS by
  definition.
  It suffice to prove that any PRS is also strongly pseudorandom.

  Suppose for contradiction that there is a distinguishing algorithm
  $\A$ that breaks the strongly pseudorandom condition.
  Namely, there exists $m\in\poly(\secpar)$ and constant $c>0$ such
  that for sufficiently large $\secpar$,
  \begin{equation*}
    \abs{ \Pr_{k\gets\K} \bigl[ \A^{O_{\phi_k}} (\ket{\phi_k}^{\otimes m}) = 1 \bigr] -
      \Pr_{\ket{\psi}\gets\mu} \bigl[ \A^{O_\psi}(\ket{\psi}^{\otimes m}) = 1 \bigr]} =
    \varepsilon(\secpar) \geq \secpar^{-c}.
  \end{equation*}

  We assume $\A$ makes $q\in\poly(\secpar)$ queries to the reflection
  oracle.
  Then, by Theorem~\ref{thm:simulation}, there is an algorithm $\B$
  such that for any $l$
  \begin{equation*}
    \abs{ \Pr_{k\gets\K} \bigl[ \A^{O_{\phi_k}}(\ket{\phi_k}^{\otimes m}) \bigr] -
      \Pr_{k\gets\K} \bigl[ \B(\ket{\phi_k}^{\otimes (m+l)}) \bigr] } \le
    \frac{2q}{\sqrt{l+1}},
  \end{equation*}
  and
  \begin{equation*}
    \abs{ \Pr_{\ket{\psi}\gets\mu} \bigl[ \A^{O_\psi}(\ket{\psi}^{\otimes m}) \bigr] -
      \Pr_{\ket{\psi}\gets\mu} \bigl[ \B(\ket{\psi}^{\otimes (m+l)}) \bigr] } \le
    \frac{2q}{\sqrt{l+1}}.
  \end{equation*}

  By triangle inequality, we have
  \begin{equation*}
    \abs{ \Pr_{k\gets\K} \bigl[ \B(\ket{\phi_k}^{\otimes (m+l)}) \bigr] -
      \Pr_{\ket{\psi}\gets\mu} \bigl[ \B(\ket{\psi}^{\otimes (m+l)}) \bigr]} \ge
    \secpar^{-c} - \frac{4q}{\sqrt{l+1}}.
  \end{equation*}
  Choosing $l = 64q^2\secpar^{2c} \in \poly(\secpar)$, we have
  \begin{equation*}
    \abs{ \Pr_{k\gets\K} \bigl[ \B(\ket{\phi_k}^{\otimes (m+l)}) \bigr] -
      \Pr_{\ket{\psi}\gets\mu} \bigl[ \B(\ket{\psi}^{\otimes (m+l)}) \bigr]} \ge
    \secpar^{-c}/2,
  \end{equation*}
  which is a contradiction with the definition of PRS for
  $\{\ket{\phi_k}\}$.
  Therefore, we conclude that PRS and strong PRS are equivalent.
\end{proof}


We now show a technical ingredient that allows us to simulate the
reflection oracle about a state by using multiple copies of the given
state.
This result is inspired by a similar theorem proved by Ambainis et
al.~\cite[Lemma 42]{ARU14}.
We have improved their reduction by replacing a reflection operation
about a particular subspace to the reflection about the standard
symmetric subspace, which we know how to implement efficiently.

\begin{theorem}
  \label{thm:simulation}
  Let $\ket{\psi}\in\H$ be a quantum state.
  Define oracle $O_\psi = \I - 2\ket{\psi}\bra{\psi}$ to be the reflection
  about $\ket{\psi}$.
  Let $\ket{\xi}$ be a state not necessarily independent of $\ket{\psi}$.
  Let $\A^{O_\psi}$ be an oracle algorithm that makes $q$ queries to
  $O_\psi$.
  For any integer $l>0$, there is a quantum algorithm $\B$ that does
  not make any queries to $O_\psi$ such that
  \begin{equation*}
    \dtr \bigl( \A^{O_\psi} (\ket{\xi}), \B (\ket{\psi}^{\otimes l} \otimes \ket{\xi})
    \bigr) \leq \frac{2q}{\sqrt{l+1}}.
  \end{equation*}
  Moreover, the running time of algorithm $\B$ is polynomial in that
  of algorithm $\A$ and $l$.
\end{theorem}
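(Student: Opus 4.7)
The plan is to simulate each query to $O_\psi$ by the reflection about the symmetric subspace on the algorithm's query register $\mathsf{Q}$ together with an ancilla register $\mathsf{S}$ holding the $l$ input copies of $\ket{\psi}$. Concretely, let $R = \I_{\mathsf{QS}} - 2\Pi^\Sym_{l+1}$ acting on the $(l+1)$ registers $\mathsf{Q}\otimes\mathsf{S}$. The algorithm $\B$ runs $\A$ verbatim, storing the $l$ input copies in $\mathsf{S}$ and the initial state $\ket{\xi}$ in $\A$'s workspace registers, and replacing each call to $O_\psi$ by one application of $R$ on $\mathsf{Q}\otimes\mathsf{S}$ (re-using the \emph{same} $\mathsf{S}$ across all $q$ queries). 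Because $\Pi^\Sym_{l+1}$ on $l+1$ systems of dimension $N$ can be implemented via the (weak) Schur transform in time $\poly(l,\log N)$, $R$ is efficient and so is $\B$.

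The quantitative heart is a per-query bound: for every $\ket{\chi}\in\mathsf{Q}\otimes\mathsf{R}$ (with $\mathsf{R}$ the rest of $\A$'s workspace),
\begin{equation*}
  \bigl\| (O_\psi\otimes\I)\bigl(\ket{\chi}\otimes\ket{\psi}^{\otimes l}\bigr) - R\bigl(\ket{\chi}\otimes\ket{\psi}^{\otimes l}\bigr) \bigr\| \leq \frac{2}{\sqrt{l+1}}.
\end{equation*}
To see this, decompose $\ket{\chi} = \ket{\psi}_{\mathsf{Q}}\otimes\ket{r_0}_{\mathsf{R}} + \ket{\chi^\perp}$ with $\ket{\chi^\perp}\in(\ket{\psi}^\perp)_{\mathsf{Q}}\otimes\mathsf{R}$. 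The first summand tensored with $\ket{\psi}^{\otimes l}$ is totally symmetric, so both $O_\psi\otimes\I$ and $R$ multiply it by $-1$. Expanding $\ket{\chi^\perp}=\sum_j\ket{e_j}_{\mathsf{Q}}\otimes\ket{w_j}_{\mathsf{R}}$ in an orthonormal basis $\{\ket{e_j}\}$ of $\ket{\psi}^\perp$, a direct calculation using $\Pi^\Sym_{l+1}(\ket{e_j}\otimes\ket{\psi}^{\otimes l}) = \tfrac{1}{l+1}\sum_{i=0}^{l}\ket{\psi}^{\otimes i}\otimes\ket{e_j}\otimes\ket{\psi}^{\otimes l-i}$ shows the error vector equals $\tfrac{2}{l+1}\sum_j\sum_i\ket{\psi}^{\otimes i}\otimes\ket{e_j}\otimes\ket{\psi}^{\otimes l-i}\otimes\ket{w_j}$; the summands are pairwise orthogonal, so its squared norm is $\tfrac{4}{l+1}\|\ket{\chi^\perp}\|^2 \le \tfrac{4}{l+1}$.

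Given the per-query bound, a standard hybrid argument closes the proof. Define $H_i$ to be the pure state produced by running $\A$ where the first $q-i$ queries use the true oracle $O_\psi$ (hence leave $\mathsf{S}$ untouched) and the last $i$ queries use $R$. Comparing $H_i$ and $H_{i+1}$, only the $(q-i)$-th query differs, and at that point $\mathsf{S}$ is still exactly $\ket{\psi}^{\otimes l}$ because no previous operation has touched it; thus the per-query bound applies and gives $\|\ket{H_i}-\ket{H_{i+1}}\|\le\tfrac{2}{\sqrt{l+1}}$. All subsequent unitaries (including any further $R$'s) preserve this Euclidean distance, so by the triangle inequality $\|\ket{H_0}-\ket{H_q}\|\le\tfrac{2q}{\sqrt{l+1}}$. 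Since $\ket{H_0}$ purifies $\A^{O_\psi}(\ket{\xi})$ (with $\mathsf{S}$ a trivial spectator) and $\ket{H_q}$ purifies $\B(\ket{\psi}^{\otimes l}\otimes\ket{\xi})$, the desired trace-distance bound follows from $\dtr(\ket{v}\!\bra{v},\ket{w}\!\bra{w})\le\|\ket{v}-\ket{w}\|$ together with monotonicity of trace distance under partial trace.

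The subtle point --- and the sole place where the improvement over the Ambainis--Rosmanis--Unruh reflection lemma lies --- is choosing a single simulating operation that (i) acts correctly on the one-dimensional symmetric component $\ket{\psi}^{\otimes(l+1)}$, (ii) has small deviation on every other input of the form $\ket{\phi}\otimes\ket{\psi}^{\otimes l}$, and (iii) admits an efficient circuit independent of $\ket{\psi}$. The reflection about the \emph{standard} symmetric subspace $\Pi^\Sym_{l+1}$ meets all three, which is what makes the simulation both quantitatively tight and computationally efficient. The rest of the argument --- the hybrid bookkeeping and the translation from state-vector distance to trace distance --- is routine.
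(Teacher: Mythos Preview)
Your proposal is correct and follows essentially the same approach as the paper: both replace each $O_\psi$ query by the reflection $R=\I-2\Pi^\Sym_{l+1}$ on the query register together with $l$ ancilla copies of $\ket{\psi}$, establish a per-query error of $2/\sqrt{l+1}$, and telescope over the $q$ queries. The only differences are cosmetic---you obtain the per-query bound via an explicit $\ket{\psi}/\ket{\psi}^\perp$ decomposition where the paper computes the inner product $\braket{\Psi_A}{\Psi_B}\ge 1-\tfrac{2}{l+1}$, and you spell out the hybrid sequence where the paper simply writes ``by induction on the number of queries''.
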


\begin{proof}
  Consider a quantum register \reg{T}, initialized in state
  $\ket{\Theta}_{\reg{T}} = \ket{\psi}^{\otimes l}$.
  Let $\Pi$ be the projection onto the symmetric subspace $\vee^{l+1}\H$
  and $R = \I- 2\Pi$ be the reflection about the symmetric subspace.

  Assume without loss of generality that algorithm $\A$ is unitary and
  only perform measurements in the end.
  We define algorithm $\B$ to be same as $\A$ except that when $\A$
  queries $O_\psi$ on register $\reg{D}$, $\B$ applies the reflection
  $R$ on the collection of quantum registers $\reg{D}$ and $\reg{T}$
  where $\reg{T}$ is initialized in state $\ket{\Theta}$.
  We first analyze the corresponding states after the first oracle
  call to $O_\psi$ in algorithms $\A$ and $\B$,
  \begin{equation*}
    \ket{\Psi_A} = O_\psi \ket{\phi}_{\reg{D}} \otimes \ket{\Theta}_{\reg{T}},\quad
    \ket{\Psi_B} = R \bigl( \ket{\phi}_{\reg{D}} \otimes \ket{\Theta}_{\reg{T}} \bigr).
  \end{equation*}

  For any two state $\ket{x}, \ket{y}\in\H$, we have
  \begin{equation*}
    \begin{split}
      \bra{x, \Theta} R \ket{y, \Theta} & = \braket{x}{y} - 2 \E_{\pi \in
        S_{l+1}} (\bra{x} \otimes \bra{\Theta}) W_\pi (\ket{y} \otimes \ket{\Theta})\\
      & = \braket{x}{y} - \frac{2}{l+1} \braket{x}{y} - \frac{2l}{l+1}
      \braket{x}{\psi} \braket{\psi}{y}\\
      & = \frac{l-1}{l+1} \braket{x}{y} - \frac{2l}{l+1}
      \braket{x}{\psi} \braket{\psi}{y},
    \end{split}
  \end{equation*}
  where the first step uses the identity in
  Eq.~\eqref{eq:symmetric-sum} and the second step follows by
  observing that the probability of a random $\pi\in S_{l+1}$ mapping
  $1$ to $1$ is $1/(l+1)$.
  These calculations imply that,
  \begin{equation*}
    \bra{\Theta} R \ket{\Theta} = \frac{l-1}{l+1}\I - \frac{2l}{l+1}
    \ket{\psi}\bra{\psi}.
  \end{equation*}

  We can compute the inner product of the two states $\ket{\Psi_A}$ and
  $\ket{\Psi_B}$ as
  \begin{equation*}
    \begin{split}
      \braket{\Psi_A}{\Psi_B} & =
      \ip{\ket{\phi,\Theta}\bra{\phi,\Theta}}{(O_\psi\otimes\I)R}\\
      & = \ip{\ket{\phi}\bra{\phi}}{O_\psi \bra{\Theta}R\ket{\Theta}}\\
      & \ge 1 - \frac{2}{l+1}.
    \end{split}
  \end{equation*}
  This implies that
  \begin{equation*}
    \norm{\ket{\Psi_A} - \ket{\Psi_B}} \le \frac{2}{\sqrt{l+1}}.
  \end{equation*}

  Let $\ket{\Psi^q_A}$ and $\ket{\Psi^q_B}$ be the final states of
  algorithm $\A$ and $\B$ before measurement respectively.

  Then by induction on the number of queries, we have
  \begin{equation*}
    \norm{\ket{\Psi^q_A} - \ket{\Psi^q_B}} \le \frac{2q}{\sqrt{l+1}}.
  \end{equation*}
  This concludes the proof by noticing that
  \begin{equation*}
    \dtr \bigl( \ket{\Psi^q_A}, \ket{\Psi^q_B} \bigr) \le
    \norm{\ket{\Psi^q_A} - \ket{\Psi^q_B}}.
  \end{equation*}

  Finally, we show that if $\A$ is polynomial-time, then so is $\B$.
  Based on the construction of $\B$, it suffices to show that the
  reflection $R$ is efficiently implementable for any polynomially
  large $l$.
  Here we use a result by Barenco et al.~\cite{BBD+97} which proposes
  an efficient implementation for the projection $\Pi$ onto $\vee^{l+1}
  \H$.
  More precisely, they design a quantum circuit of size $O(\poly(l,
  \log \dim \H))$ that implements a unitary $U$ such that $U \ket{\phi}
  = \sum_{j}\ket{\xi_j}\ket{j}$ on $\H^{\otimes (l+1)}\otimes \H'$ for an auxiliary
  space $\H'$ of dimension $O(\dim(\H^{\otimes l}))$.
  Here $\ket{\xi_0} = \Pi \ket{\phi}$ corresponds to the projection of
  $\ket{\phi}$ on the symmetric subspace.
  With $U$, we can implement the reflection $R$ as $U^* S U$ where $S$
  is the unitary that introduces a minus sign conditioned on the
  second register being $0$.
  \begin{equation*}
    S \ket{\Psi} \ket{j} =
    \begin{cases}
      -\ket{\Psi} \ket{j} & \text{ if $j=0$,}\\
      \phantom{-}\ket{\Psi} \ket{j} & \text{ otherwise.}
    \end{cases}
  \end{equation*}
\end{proof}

\subsection{Quantum Money from PRS}
\label{sec:money}


Using Theorem~\ref{thm:equivalence}, we can improve
Theorem~\ref{thm:non-cloning-simple} to the following version.
The proof is omitted as it is very similar to that for
Theorem~\ref{thm:non-cloning-simple} and uses the complexity-theoretic
non-cloning theorem~\cite{Aar09,AC12} for Haar random states.


\begin{theorem}[Cryptographic Non-cloning Theorem with Oracle]
  \label{thm:non-cloning-oracle}
  For any PRS $\{ \ket{\phi_k} \}_{k\in\K}$, $m\in\poly(\secpar)$, $m<m'$
  and any polynomial-time quantum query algorithm $\C$, the success
  cloning probability
  \begin{equation*}
    \E_{k\in\K} \Bigl\langle \bigl( \ket{\phi_k}\bra{\phi_k} \bigr)^{\otimes m'},
    \C^{O_{\phi_k}} \bigl( \bigl(\ket{\phi_k}\bra{\phi_k}
    \bigr)^{\otimes m} \bigr) \Bigr\rangle = \negl(\secpar).
  \end{equation*}
\end{theorem}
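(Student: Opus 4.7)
The plan is to adapt the SWAP-test reduction from Theorem~\ref{thm:non-cloning-simple} to the oracle setting, invoking strong pseudorandomness (Theorem~\ref{thm:equivalence}) in place of standard PRS. Suppose for contradiction that there is a polynomial-time quantum query algorithm $\C$ with expected success cloning probability at least $\secpar^{-c}$ for some constant $c>0$; without loss of generality take $m' = m+1$.

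I would build a distinguisher $\D$ for strong PRS in the sense of Definition~\ref{def:strong-prs}. On input $2m+1$ copies of a state together with a reflection oracle $O$ (either $O_{\phi_k}$ or $O_\psi$), $\D$ hands the first $m$ copies and the oracle to $\C$, obtains a purported $(m+1)$-fold copy in some register $\reg{R}$, and then performs $m+1$ coordinate-wise SWAP tests between $\reg{R}$ and the remaining $m+1$ copies, accepting iff every test passes. Since $\C$ is polynomial time and makes polynomially many oracle queries, so does $\D$.

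Next I would analyze the two cases. In the PRS input case, the average fidelity of the output of $\C$ with $\bigl(\ket{\phi_k}\bra{\phi_k}\bigr)^{\otimes(m+1)}$ is at least $\secpar^{-c}$ by assumption, so $\D$ accepts with probability at least $(1 + \secpar^{-c})/2$. In the Haar-random input case, I would invoke the complexity-theoretic no-cloning theorem of Aaronson~\cite{Aar09} (also used in~\cite{AC12}), which states that even with reflection-oracle access, the expected fidelity achievable by any polynomial-time cloner on a Haar-random state is $\negl(\secpar)$; hence $\D$ accepts with probability at most $(1 + \negl(\secpar))/2$. The resulting non-negligible gap contradicts the strong pseudorandomness of $\{\ket{\phi_k}\}$ guaranteed by Theorem~\ref{thm:equivalence}.

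The main obstacle, and the reason this does not follow verbatim from Theorem~\ref{thm:non-cloning-simple}, is the Haar-case bound: the dimension-counting argument of Werner~\cite{Wer98} used in the oracle-free proof does not apply once the cloner can query $O_\psi$, and in principle such reflection queries could leak substantial information about $\ket{\psi}$. Replacing that information-theoretic bound with the complexity-theoretic no-cloning theorem for Haar-random states with reflection-oracle access is the one genuinely new ingredient; once it is in place, the SWAP-test reduction and the appeal to strong PRS go through exactly as in the proof of Theorem~\ref{thm:non-cloning-simple}.
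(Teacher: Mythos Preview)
Your approach is exactly the paper's: build a strong-PRS distinguisher from the hypothetical oracle cloner via a SWAP-test, and replace Werner's dimension bound in the Haar case by the complexity-theoretic no-cloning theorem of~\cite{Aar09,AC12}, then invoke Theorem~\ref{thm:equivalence}. One small technical slip: performing $m+1$ \emph{coordinate-wise} SWAP tests and accepting only if all pass does not give acceptance probability $(1+F)/2$; instead perform a single SWAP test between the full $(m{+}1)$-register output of $\C$ and the block of $m{+}1$ reference copies, which yields exactly $\bigl(1+\bigl\langle(\ket{\phi}\bra{\phi})^{\otimes(m+1)},\,\C^{O_\phi}(\cdot)\bigr\rangle\bigr)/2$ as you claim.
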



A direct application of this non-cloning theorem is that it gives rise
to new constructions for private-key quantum money.
As one of the earliest foundational findings in quantum
information~\cite{Wie70,BBBW83}, quantum money schemes have received
revived interests in the past decade (see
e.g.~\cite{Aar09,LAF+10,MS10,FGH+10,FGH+12,AFG12}).
First, we recall the definition of quantum money scheme adapted
from~\cite{AC12}.

\begin{definition}[Quantum Money Scheme]
  A private-key quantum money scheme $\S$ consists of three
  algorithms:
  \begin{itemize}
  \item \KeyGen, which takes as input the security parameter
    $\secparam$ and randomly samples a private key $k$.
  \item \Bank, which takes as input the private key $k$ and generates
    a quantum state $\ket{\money}$ called a \textbf{banknote}.
  \item \Ver, which takes as input the private key $k$ and an alleged
    banknote $\ket{\cent}$, and either accepts or rejects.
  \end{itemize}

  The money scheme $\S$ has \textbf{completeness error} $\varepsilon$ if
  $\Ver\,(k,\ket{\money})$ accepts with probability at least $1-\varepsilon$
  for all valid banknote $\ket{\money}$.

  Let \Count be the money counter that output the number of valid
  banknotes when given a collection of (possibly entangled) alleged
  banknotes $\ket{\cent_1, \cent_2, \ldots, \cent_r}$.
  Namely, \Count will call \Ver on each banknotes and return the
  number of times that \Ver accepts.
  The money scheme $\S$ has \textbf{soundness error} $\delta$ if for any
  polynomial-time counterfeiter $C$ that maps $q$ valid banknotes
  $\ket{\money_1}, \ldots, \ket{\money_q}$ to $r$ alleged banknotes
  $\ket{\cent_1, \ldots, \cent_r}$ satisfies
  \begin{equation*}
    \Pr \bigl[ \Count\, \bigl( k, C(\ket{\money_1}, \ldots, \ket{\money_q})
    \bigr) > q \bigr] \le \delta.
  \end{equation*}
  The scheme $\S$ is \textbf{secure} if it has completeness error $\le
  1/3$ and negligible soundness error.
\end{definition}


For any $\prs = \bigl\{ \ket{\phi_k} \bigr\}_{k\in\K}$ with key space
$\K$, we can define a private-key quantum money scheme $\S_\prs$ as
follows:
\begin{itemize}
\item $\KeyGen(\secparam)$ randomly outputs $k\in\K$.
\item $\Bank(k)$ generates the banknote $\ket{\money} = \ket{\phi_k}$.
\item $\Ver(k,\rho)$ applies the projective measurement that accepts
  $\rho$ with probability $\bra{\phi_k} \rho \ket{\phi_k}$.
\end{itemize}

We remark that usually the money state $\ket{\money}$ takes the form
$\ket{\money} = \ket{s, \psi_s}$ where the first register contains a
classical serial number.
Our scheme, however, does not require the use of the serial numbers.
This simplification is brought to us by the strong requirement that
any polynomial copies of $\ket{\phi_k}$ are indistinguishable from Haar
random states.


\begin{theorem}
  The private-key quantum money scheme $\S_\prs$ is secure for all
  $\prs$.
\end{theorem}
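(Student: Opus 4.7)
The plan is to verify the two required properties of $\S_\prs$ separately. Completeness is immediate: by construction $\Ver(k,\ket{\phi_k})$ applies the projective measurement $\{\ket{\phi_k}\bra{\phi_k}, \I-\ket{\phi_k}\bra{\phi_k}\}$ to $\ket{\phi_k}$ and thus accepts with probability $\abs{\braket{\phi_k}{\phi_k}}^2 = 1$, so the completeness error is $0 \le 1/3$.

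For soundness, I would argue by contradiction and reduce to the cryptographic non-cloning theorem with oracle (Theorem~\ref{thm:non-cloning-oracle}). Suppose a polynomial-time counterfeiter $C$ maps $q$ valid banknotes to $r$ alleged banknotes with
\begin{equation*}
  \E_{k\gets\K}\Pr\bigl[\Count\,(k, C(\ket{\phi_k}^{\otimes q})) > q\bigr] \ge \secpar^{-c}
\end{equation*}
for some constant $c>0$ and infinitely many $\secpar$. I would construct an efficient quantum algorithm $\mathcal{C}'$, with oracle access to $O_{\phi_k}$ and input $\ket{\phi_k}^{\otimes q}$, that produces $q+1$ near-perfect copies of $\ket{\phi_k}$ with non-negligible probability. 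The algorithm runs $C$ on its $q$ input copies to obtain $r$ output registers, and then simulates $\Ver(k,\cdot)$ on each register using a single query to $O_{\phi_k}$ per register (via standard phase-kickback on an ancilla in $\ket{+}$, which distinguishes the $-1$ eigenvector $\ket{\phi_k}$ from its orthogonal complement). If at least $q+1$ registers accept, $\mathcal{C}'$ keeps any $q+1$ of them and discards the rest; otherwise it outputs arbitrary garbage.

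Conditioned on at least $q+1$ registers accepting, each kept register has been projected onto $\ket{\phi_k}$, so the output is exactly $\ket{\phi_k}^{\otimes (q+1)}$. Hence the cloning success probability of $\mathcal{C}'$, averaged over $k$, is at least $\secpar^{-c}$. Since $r$ and $q$ are polynomial in $\secpar$, $\mathcal{C}'$ runs in polynomial time and makes $r = \poly(\secpar)$ queries to $O_{\phi_k}$, contradicting Theorem~\ref{thm:non-cloning-oracle}. We conclude that the soundness error is negligible.

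The main conceptual obstacle is that the cloner in Theorem~\ref{thm:non-cloning-oracle} has no direct access to the key $k$, whereas $\Ver$ does; the reflection oracle $O_{\phi_k}$ is precisely the bridge that lets us simulate $\Ver$ without $k$, which is why we need the oracle version of the non-cloning theorem (and hence the equivalence established in Theorem~\ref{thm:equivalence}) rather than the bare version. A minor point to handle carefully is that which registers pass verification can depend on the measurement outcomes, but since we only need to retain $q+1$ of them this introduces no difficulty: the post-measurement state on the retained registers is deterministically $\ket{\phi_k}^{\otimes (q+1)}$ conditioned on the acceptance event.
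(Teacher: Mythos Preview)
Your proposal is correct and follows essentially the same route as the paper: both reduce a hypothetical counterfeiter to a cloning algorithm that uses the reflection oracle $O_{\phi_k}$ (via the same phase-kickback trick on an ancilla in $\ket{+}$) to simulate $\Ver$, keeps $q+1$ accepting registers, and invokes Theorem~\ref{thm:non-cloning-oracle}. The only addition is that you spell out completeness explicitly, which the paper leaves implicit.
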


\begin{proof}
  It suffices to prove the soundness of $S_\prs$ is negligible.
  Assume to the contrary that there is a counterfeiter $C$ such that
  \begin{equation*}
    \Pr \bigl[ \Count \bigl( k, C(\ket{\phi_k}^{\otimes q}) \bigr) > q \bigr]
    \ge \secpar^{-c}
  \end{equation*}
  for some constant $c>0$ and sufficiently large $\secpar$.
  From the counterfeiter $C$, we will construct an oracle algorithm
  $\A^{O_{\phi_k}}$ that maps $q$ copies of $\ket{\phi_k}$ to $q+1$ copies
  with noticeable probability and this leads to a contradiction with
  Theorem~\ref{thm:non-cloning-oracle}.

  The oracle algorithm $\A$ first runs $C$ and implement the
  measurement
  \begin{equation*}
    \Bigl\{ M^0 = \I - \ket{\phi_k}\bra{\phi_k}, M^1 =
    \ket{\phi_k}\bra{\phi_k} \Bigr\}
  \end{equation*}
  on each copy of the money state $C$ outputs.
  This measurement can be implemented by attaching an auxiliary qubit
  initialized in $(\ket{0} +\ket{1})/\sqrt{2}$ and call the reflection
  oracle $O_\phi$ conditioned on the qubit being at $1$ and performs the
  $X$ measurement on this auxiliary qubit.
  This gives $r$-bit of outcome $\x\in\{0,1\}^r$.
  If $\x$ has Hamming weight at least $q+1$, algorithm $\A$ outputs
  any $q+1$ registers that corresponds to outcome $1$; otherwise, it
  outputs $\ket{0}^{\otimes (q+1)}$.
  By the construction of $\A$, the probability that it succeeds in
  cloning $q+1$ money states from $q$ copies is at least
  $\secpar^{-c}$.
\end{proof}

Our security proof of the quantum money scheme is arguably simpler
than that in~\cite{AC12}.
In~\cite{AC12}, to prove their hidden subspace money scheme is secure,
one needs to develop the so called inner-product adversary method to
show the worst-case query complexity for the hidden subspace states
and use a random self-reducible argument to establish the average-case
query complexity.
In our case, it follows almost directly from the cryptographic
non-cloning theorem with oracles.
The quantum money schemes derived from PRS's enjoy many nice features
of the hidden subspace scheme.
Most importantly, they are also \emph{query-secure}~\cite{AC12},
meaning that the bank can simply return the money state back to the
user after verification.

It is also interesting to point out that quantum money states are not
necessarily pseudorandom states.
The hidden subspace state~\cite{AC12}, for example, do not satisfy our
definition of PRS as one can measure polynomially many copies of the
state in the computational basis and recover a basis for the hidden
subspace with high probability.

\section{Pseudorandom Unitary Operators}
\label{sec:psu}


Let $\H$ be a Hilbert space and let $\K$ a key space, both of which
depend on a security parameter $\secpar$.
Let $\mu$ be the Haar measure on the unitary group $\Unitary(\H)$.

\begin{definition}
  \label{def:pru}
  A family of unitary operators $\{U_k \in\Unitary(\H)\}_{k\in\K}$ is
  \textbf{pseudorandom}, if two conditions hold:
  \begin{enumerate}
  \item (\textbf{Efficient computation}) There is an efficient quantum
    algorithm $Q$, such that for all $k$ and any
    $\ket{\psi}\in\State(\H)$, $Q(k, \ket{\psi}) = U_k \ket{\psi} $.
  \item (\textbf{Pseudorandomness}) $U_k$ with a random key $k$ is
    \textbf{computationally indistinguishable} from a Haar random
    unitary operator.
    More precisely, for any efficient quantum algorithm $\A$ that
    makes at most polynomially many queries to the oracle,
    \begin{equation*}
      \abs{\Pr_{k\gets \K} \bigl[ \A^{U_k}(1^\secpar) =1 \bigr] -
        \Pr_{U\gets \mu} \bigl[\A^U(1^\secpar) = 1 \bigr]} =
      \negl(\secpar).
    \end{equation*}
  \end{enumerate}
\end{definition}


Our techniques for constructing pseudorandom states can be extended to
give candidate constructions for pseudorandom unitary operators (PRUs)
in the following way.
Let $\H = (\mathbb{C}^2)^{\otimes n}$.
Assume we have a pseudorandom function $\prf:\: \K \times \X \rightarrow \X$, with
domain $\X = \{0, 1, 2, \ldots, N-1\}$ and $N=2^n$.
Using the phase kick-back technique, we can implement the unitary
transformation $T_k \in \Unitary(\H)$ that maps
\begin{equation}
  T_k:\: \ket{x} \mapsto \omega_N^{\prf_k(x)} \ket{x}, \quad \omega_N = \exp(2\pi i/N).
\end{equation}
Our pseudorandom states were given by $\ket{\phi_k} = T_k H^{\otimes n}
\ket{0}$, where $H^{\otimes n}$ denotes the $n$-qubit Hadamard transform.
We conjecture that by repeating the operation $T_k H^{\otimes n}$ a
polynomial number of times, we get a pseudorandom unitary operation.
Alternatively, a polynomial number of times of repetition of the
circuit
\begin{equation*}
  \prp_k (H^{\otimes n}\otimes \I^{\otimes n}),
\end{equation*}
on $2n$ qubits may be another candidate construction.


Pseudorandom unitary operators may be useful and provide efficient
alternatives for unitary $t$-designs in may settings.
It also trivially implies the existence of PRS's by definition and can
be employed whenever a PRS is used.

\begin{lemma}
  For any PRU $\{ U_k \}_{k\in\K}$, $\{ U_k \ket{0} \}_{k\in\K}$ is a
  PRS.
\end{lemma}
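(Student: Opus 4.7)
The plan is a direct reduction from the hypothetical PRS distinguisher to a PRU distinguisher, exploiting the fact that oracle access to a unitary allows the extraction of arbitrarily many copies of its action on a fixed input state.

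First I would handle the efficient generation requirement for $\{U_k\ket{0}\}_{k\in\K}$. Since the PRU definition provides a polynomial-time algorithm $Q$ with $Q(k,\ket{\psi}) = U_k\ket{\psi}$ for every $\ket{\psi}$, we simply set the PRS generator to be $G(k) := Q(k, \ket{0}^{\otimes n})$, which clearly runs in polynomial time and outputs $U_k \ket{0}$.

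For the pseudorandomness property, suppose toward contradiction that there exists a polynomial-time algorithm $\A$ and $m\in\poly(\secpar)$ such that
\begin{equation*}
  \abs{ \Pr_{k\gets\K}\bigl[\A\bigl((U_k\ket{0})^{\otimes m}\bigr)=1\bigr]
       - \Pr_{\ket{\psi}\gets\mu}\bigl[\A\bigl(\ket{\psi}^{\otimes m}\bigr)=1\bigr] }
\end{equation*}
is non-negligible. I would then construct a PRU distinguisher $\B$ that, given oracle access to a unitary $V$ on $\H$, prepares $m$ fresh copies of $\ket{0}$, queries $V$ once on each copy to obtain $(V\ket{0})^{\otimes m}$, and feeds this state into $\A$, finally outputting $\A$'s bit. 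Clearly $\B$ runs in polynomial time and makes exactly $m\in\poly(\secpar)$ queries to its oracle.

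The key step is to observe that when $V$ is sampled from the Haar measure on $\Unitary(\H)$, the state $V\ket{0}$ is distributed according to the Haar measure on $\State(\H)$, by unitary invariance of the Haar measure (more concretely, for any fixed $\ket{\psi_0}$ and any Borel set $S\subseteq\State(\H)$, the set $\{V : V\ket{\psi_0}\in S\}$ has the same Haar mass as $S$ has under $\mu$). Consequently, $\B$'s output distribution in the Haar-oracle case exactly matches the distribution of $\A(\ket{\psi}^{\otimes m})$ for Haar-random $\ket{\psi}$, while in the $V=U_k$ case it exactly matches $\A((U_k\ket{0})^{\otimes m})$. Thus $\B$'s distinguishing advantage equals that of $\A$, contradicting Definition~\ref{def:pru}.

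There is no real obstacle here; the only subtlety worth double-checking is the push-forward identity $V\ket{0}\sim\mu_{\State}$ when $V\sim\mu_{\Unitary}$, and that $\B$ is indeed a legitimate PRU adversary (polynomial time, polynomially many oracle queries), both of which are immediate.
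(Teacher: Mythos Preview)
Your proposal is correct and follows essentially the same reduction as the paper: build a PRU distinguisher that prepares $m$ copies of $\ket{0}$, applies the oracle to each, and forwards the result to the hypothetical PRS distinguisher. You add the explicit verification of efficient generation and the push-forward argument that $V\ket{0}$ is Haar-distributed when $V$ is, both of which the paper leaves implicit, but the structure is identical.
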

\begin{proof}
  If there is a quantum algorithm $\A$ that takes $m\in\poly(\secpar)$
  copies of state $U_k \ket{0}$ and distinguishes it from Haar random
  states, we can design algorithm $\A'$ that distinguishes $ \{ U_k
  \}$ from the Haar random $U$ in the following way.
  $\A'$ prepares $m$ copies of $\ket{0}$, calls the unitary oracle on
  each copy, sends these states to algorithm $\A$ and outputs whatever
  $\A$ outputs.
\end{proof}

\newpage

\bibliographystyle{acm}

\bibliography{pseudorandom-quantum-states}


\end{document}